\newtheorem{theorem}{Theorem}[section]
\newtheorem{lemma}[theorem]{Lemma}
\newtheorem{definition}[theorem]{Definition}
\newtheorem{conjecture}[theorem]{Conjecture}
\newtheorem{fact}[theorem]{Fact}
\newtheorem{example}[theorem]{Example}
\newcommand{\ov}{\overline}
\DeclareMathOperator{\inn}{in}
\DeclareMathOperator{\out}{out}
\renewcommand{\bar}{\overline}
\DeclareMathOperator*{\E}{\mathbb{E}}
\DeclareMathOperator{\OPT}{OPT}
\newcommand*{\RN}[1]{\expandafter\@slowromancap\romannumeral #1@}
\newcommand{\define}[4][ignore]{%
  \ifstrequal{#1}{ignore}{}{
  \@namedef{thmtitle@#2}{#1}}%
  \@namedef{thm@#2}{#4}%
  \@namedef{thmtypen@#2}{lemma}%
  \newtheorem{thmtype@#2}[theorem]{#3}%
  \newtheorem*{thmtypealt@#2}{#3~\ref{#2}}%
}
\newcommand{\state}[1]{%
  \@namedef{curthm}{#1}
  \@ifundefined{thmtitle@#1}{
  \begin{thmtype@#1}
    }{
  \begin{thmtype@#1}[\@nameuse{thmtitle@#1}]
  }
    \label{#1}
    \@nameuse{thm@#1}
  \end{thmtype@#1}
  \@ifundefined{thmdone@#1}{
  \@namedef{thmdone@#1}{stated}%
  }{}
}
\newcommand{\restate}[1]{%
  \@namedef{curthm}{#1}
  \@ifundefined{thmtitle@#1}{
    \begin{thmtypealt@#1}
    }{
  \begin{thmtypealt@#1}[\@nameuse{thmtitle@#1}]
  }
    \@nameuse{thm@#1}
  \end{thmtypealt@#1}
  \@ifundefined{thmdone@#1}{
  \@namedef{thmdone@#1}{stated}%
  }{}
}
\newcommand{\thmlabel}[1]{
  \@ifundefined{thmdone@\@nameuse{curthm}}{\label{#1}
    }{\tag*{\eqref{#1}}}
}
\begin{document}

\runningauthor{Liau, Price, Song, Yang}
\twocolumn[

\aistatstitle{Stochastic Multi-armed Bandits in Constant Space}

\aistatsauthor{
  David Liau\\
  \texttt{davidliau@utexas.edu}\\
  \And
  Eric Price \\
  \texttt{ecprice@cs.utexas.edu}\\
  \And
  Zhao Song \\
  \texttt{zhaos@utexas.edu}\\
  \And
  Ger Yang \\
  \texttt{geryang@utexas.edu}\\
  
}

\aistatsaddress{The University of Texas at Austin}

]
  \begin{abstract}
We consider the stochastic bandit problem in the sublinear space
setting, where one cannot record the win-loss record for all $K$ arms.
We give an algorithm using $O(1)$ words of space with regret
\[
 \sum_{i=1}^{K}\frac{1}{\Delta_i}\log \frac{\Delta_i}{\Delta}\log T
\] where $\Delta_i$ is the gap between the best arm and arm $i$ and
$\Delta$ is the gap between the best and the second-best arms.  If the
rewards are bounded away from $0$ and $1$, this is within an $O(\log
1/\Delta)$ factor of the optimum regret possible without space
constraints.


  \end{abstract}



\section{Introduction}
In this paper, we study the multi-arm bandit problem in a sublinear
space setting.  In an instance of the bandit problem, there are $K$
arms and a finite time horizon $1,\dots,T$, where $T$ could be unknown
to us.  At each time step, we pull one of the $K$ arms, and receive a
reward that depends on our choice.  The goal is to find a strategy
that would achieve a sublinear (with respect to time) \emph{regret},
which is defined as the difference between the cumulative reward we
received from our strategy and the reward we could have received if we
always pulled the best arm in the hindsight.

There are many formulations of the bandit problem.  In this paper we
consider the stochastic setting specifically.  In
the stochastic setting, one assumes the rewards from the $i$-th arm
are i.i.d. random variables, with mean $\mu_i$ and support $[0,1]$.  A
well-known algorithm for the stochastic bandit is the UCB algorithm
\citep{ACF02}, and it is known that UCB achieves regret $O(K \log T)$.

The UCB algorithm requires $\Omega(K)$ space since it records the estimated rewards from all of the $K$ arms. However, in settings with limited space such as streaming algorithms, or settings with infinitely many arms \citep{K04}, the requirement is problematic.  There is a significant literature addressing this problem, but existing approaches assume structural properties on the set of arms,
e.g. combinatorial structure \citep{CL12} or continuum arm with local Lipschitz condition \citep{K04}.  A natural question is, what can we do without these structural assumptions given limited space?

A particular example is in a streaming algorithm setting, where space is much more limited than time, such as a router \citep{z13}. If the space constraint is $o(K)$ but the time constraint is $\Omega(K)$, one cannot run traditional UCB. In this case, $O(K)$ regret is still acceptable, and by accepting O(K) total regret, we can avoid requiring structural assumptions.  In a router, complicated strategy would corresponds to a larger set $K$ of possible strategies, which grants us the tradeoff: larger $K$ will result in a higher regret with a better optimum. 
Since routers have strict space constraints, running UCB would result in an extremely small regret on average over time ($K/T=\text{space}/\text{time}$, which is acceptable for routers). Our algorithm provides more flexibility in this bias/variance tradeoff.

\paragraph{Our techniques.}
Our algorithm is based on fairly simple ideas.  First, suppose we know
the time horizon $T$ and the expected value of the optimal arm $\mu^*$.  We
could then make a single pass through the arms; for each arm $i$, flip
it until we have high ($1 - 1/T^3$) confidence that
$\Delta_i = \mu^* - \mu_i > 0$, where $\mu_i$ is the expected value of arm $i$.  Once this happens, move to the next
arm.  This will flip each arm $O(\frac{\log T}{\Delta_i^2})$ times,
inducing regret $O(\frac{\log T}{\Delta_i^2} \cdot \Delta_i)$ from
this arm.  The total regret will then be
$O(\sum_{i \neq i^*} \frac{\log T}{\Delta_i})$, which is ideal, with
only constant space required.  The problem is that we don't know $T$
or $\mu^*$.  Not knowing $T$ isn't a big deal -- we can partition the
time horizon into $\log \log T$ scales, and the last $\log T$ term
will dominate \citep{AO10} -- but not knowing $\mu^*$ is a serious
problem.

We solve this problem by iteratively refining upper and lower bounds
$\mu_{LB}$ and $\mu_{UB}$ on $\mu^*$.  In each pass through the data,
we get new estimates that are half as far from each other.  After
$O(\log(1/\Delta))$ passes, where $\Delta=\min_{i:\Delta_i>0} \Delta_i$ is the minimal gap between the optimal and the suboptimal arms, only the best arm $i^*$ will remain
in the interval.  This gives an algorithm that loses at most an
$O(\log(1/\Delta))$ factor in the regret.  In some cases, the                                                     
loss is significantly smaller. Therefore, we can obtain the following result that improves the $O(\log(1/\Delta))$ factor into a $O(\log(\Delta_i/\Delta))$ factor,
\begin{theorem}
	\label{thm:main_informal}
Given a stochastic bandit instance with $K$ arms and their expected values $\mu_{1}, \cdots \mu_k \in [0,1]$. Let $\mu_* = \max_{i\in [K]} \mu_i$, $\Delta_i = \mu_* - \mu_i$, and $\Delta = \min_{i : \Delta_i > 0} \Delta_i$. For any $T>0$, there exists an algorithm that uses $O(1)$ words of space and achieves regret
\begin{align*}            
O\left( \sum_{i : \Delta_i>0} \frac{1}{\Delta_i} \log  \frac{\Delta_i}{\Delta}  \log T \right).
\end{align*}
\end{theorem}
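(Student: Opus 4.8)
The plan is to construct the algorithm in the three layers sketched in the ``Our techniques'' paragraph and to bound the regret layer by layer. \emph{Layer 1 (known $T$ and $\mu_*$).} I would first analyze the ideal one-pass scheme: scan the arms $1,\dots,K$ in order, and on the current arm $i$ keep pulling, maintaining the empirical mean $\hat\mu_i$ after $n_i$ pulls together with a Hoeffding radius $\beta_i=\Theta(\sqrt{\log T/n_i})$; move to arm $i+1$ as soon as $\hat\mu_i+\beta_i<\mu_*$. A Hoeffding bound gives that, except with probability $1/T^{3}$, arm $i$ is abandoned within $O(\log T/\Delta_i^2)$ pulls while the optimal arm $i^*$ is never abandoned; after the scan we pull $i^*$ for the remaining steps. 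This costs regret $O\big(\sum_{i\ne i^*}\Delta_i\cdot\log T/\Delta_i^2\big)=O\big(\sum_{i\ne i^*}\log T/\Delta_i\big)$ using $O(1)$ words (the index $i$, the counter $n_i$, the running reward sum).

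\emph{Layer 2 (unknown $\mu_*$).} Because $O(1)$ space forbids remembering which arms were eliminated, I would instead keep only two scalars $\mu_{LB}\le\mu_*\le\mu_{UB}$ and re-scan all arms over $R=O(\log(1/\Delta))$ passes, halving the interval each pass. In pass $t$ use resolution $\rho_t=\Theta(2^{-t})$ with $\mu_{UB}-\mu_{LB}=\rho_t$ at the start. For each arm $i$: pull it, and either (a) stop and skip arm $i$ once $\hat\mu_i+\beta_i<\mu_{LB}$ (confidence $1-1/T^3$ that $\mu_i<\mu_{LB}$), or else (b) pull it $\Theta(\log T/\rho_t^2)$ times and record $\hat\mu_i$ to within $\pm\rho_t/4$. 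Tracking the running maximum $m_t$ of the case-(b) estimates, at the end of the pass set $\mu_{LB}\leftarrow m_t-\rho_t/2$ and $\mu_{UB}\leftarrow m_t+\rho_t/2$. A union bound over the $O(K\log(1/\Delta)\log T)$ estimation events shows $\mu_*\in[\mu_{LB},\mu_{UB}]$ throughout and that $\mu_*-\mu_{LB}=\Theta(\rho_t)$; since $\mu_i\le\mu_*\le\mu_{UB}$, the optimal arm is never skipped and always falls in case (b), so once a pass contains exactly one case-(b) arm we commit to it forever. This occurs by pass $R+1$, because once $\rho_t<\Delta$ every suboptimal arm has $\mu_i\le\mu_*-\Delta<\mu_{LB}$.

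\emph{Layer 3 (accounting, then unknown $T$).} Fix a suboptimal arm $i$ and split its passes at $\rho_t\approx\Delta_i$. For the $O(\log(1/\Delta_i))$ passes with $\rho_t\gtrsim\Delta_i$, arm $i$ is pulled at most the case-(b) budget $O(\log T/\rho_t^2)$, a geometric series summing to $O(\log T/\Delta_i^2)$ pulls, i.e.\ regret $O(\log T/\Delta_i)$. For the remaining passes $\rho_t$ is small enough that $\mu_{LB}-\mu_i\ge\Delta_i/2$, so arm $i$ is dismissed in case (a) after $O(\log T/\Delta_i^2)$ pulls; there are at most $R-\Theta(\log(1/\Delta_i))=O(\log(\Delta_i/\Delta))$ of them, costing $O(\log T/\Delta_i)$ regret each. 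Hence arm $i$ contributes $O\big(\tfrac{1}{\Delta_i}\log\tfrac{\Delta_i}{\Delta}\log T\big)$ and summing over $i$ gives the claimed bound (reading $\log(\Delta_i/\Delta)$ as $1+\log(\Delta_i/\Delta)$ in the boundary case $\Delta_i=\Delta$). Finally, to remove the assumption that $T$ is known, run the scheme in epochs of doubling length $T_j=2^j$ with confidence $1/T_j^3$; the per-epoch regrets form a geometric series dominated by the final epoch, and the total failure probability is $O(1/T)$ while any failure costs at most $T$, so failures add only $O(1)$ in expectation.

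\emph{Main obstacle.} The heart of the argument is Layer 2: since we cannot store a surviving set, every pass re-touches every arm, and one must show this re-touching is cheap. The two points that make it work are that the maintained lower bound provably climbs to within $O(\rho_t)$ of $\mu_*$, so a suboptimal arm is rejected by the cheap case (a) in $O(\log T/\Delta_i^2)$ pulls rather than the full $O(\log T/\rho_t^2)$ budget once $\rho_t\lesssim\Delta_i$; and that the number of passes in which arm $i$ still incurs this cost is only $O(\log(\Delta_i/\Delta))$, improving on the naive $O(\log(1/\Delta))$. Pinning down the constants in $\rho_t$, $\beta_i$, and the $\mu_{LB}/\mu_{UB}$ update so that $\mu_*\in[\mu_{LB},\mu_{UB}]$ is preserved while the width still halves, together with the union bound over all estimation events, is the main technical work.
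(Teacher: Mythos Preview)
Your plan is essentially the paper's proof: halving-precision passes that re-scan all arms, early elimination of arm $i$ once its upper confidence falls below a lower bound derived from the previous pass's best estimate, the regret split at the pass where $\rho_t\approx\Delta_i$ (geometric sum before, $O(\log(\Delta_i/\Delta))$ equal terms after), and an epoch trick for unknown $T$. The paper's stopping rule compares the best and second-best empirical means in a pass rather than counting how many arms reach case~(b), but that is cosmetic.

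There is one genuine slip in Layer~3. With epochs of length $T_j=2^j$, the regret of epoch $j$ is $\Theta(C\log T_j)=\Theta(Cj)$ where $C=\sum_i \tfrac{1}{\Delta_i}\log\tfrac{\Delta_i}{\Delta}$, and summing over $j=1,\dots,\log_2 T$ gives $\Theta(C\log^2 T)$: the epoch \emph{lengths} are geometric but the epoch \emph{regrets}, which scale like $\log T_j$, form an arithmetic series, not a geometric one. To make the last epoch dominate you need $\log T_l$ itself to grow geometrically, i.e.\ $T_l=T_{l-1}^2$ (so $\log T_l=2^l\log T_0$); this is exactly what the paper does, and with it your argument goes through verbatim.
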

Recall that the well-known UCB algorithm gives regret $O(\sum_{i:\Delta_i>0} \frac{\log T}{\Delta_i})$.
Our algorithm is always within a $\log (\Delta_i/\Delta)$ factor of its space-unlimited version.  In certain situations, we can do slightly better by refining our estimate of $\mu^*$ by more than a constant factor in each iteration.  This gives us the following result
\begin{theorem}\label{thm:main_informal2}
Under the same setting as Theorem~\ref{thm:main_informal}, for any  $\gamma>0$, there exists an algorithm that uses $O(1)$ words of space and achieves regret 
\begin{align*}
	O \left( \sum_{i:\Delta_i>0}  \frac{1}{\Delta_i} \left(\log^{\gamma}\frac{1}{\Delta_i} + \frac{\log (\Delta_i/\Delta)}{\gamma \log \log (\Delta_i/\Delta)} \right)\log(T) \right).
	\end{align*}
\end{theorem}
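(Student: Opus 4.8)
The plan is to keep the Theorem~\ref{thm:main_informal} algorithm essentially unchanged and modify only the schedule by which the confidence interval $[\mu_{LB},\mu_{UB}]$ that provably contains $\mu^*$ is tightened. Recall that there the interval has width $\epsilon_r$ after pass $r$ with $\epsilon_{r+1}=\epsilon_r/2$, so $\Theta(\log(1/\Delta))$ passes are needed before the interval is narrow enough that only the optimal arm survives, and once $\epsilon_r\lesssim\Delta_i$ each such pass costs $O(\log T/\Delta_i^2)$ pulls of a suboptimal arm $i$ (using $\mu_{LB}\ge\mu^*-\epsilon_r$, so $\mu_{LB}-\mu_i\gtrsim\Delta_i$); this is the source of the $\log(\Delta_i/\Delta)$ factor. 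For Theorem~\ref{thm:main_informal2} I would replace the constant shrink factor $2$ by a slowly growing, pass-dependent factor $f_r$, concretely $\log f_r\asymp\gamma\log r$, so that $\log(1/\epsilon_r)\asymp\gamma\, r\log r$, equivalently $r\asymp\frac{\log(1/\epsilon_r)}{\gamma\log\log(1/\epsilon_r)}$. In pass $r$ the algorithm still flips each arm $i$ until its empirical mean witnesses $\mu_i<\mu_{LB}$ or until it is estimated to precision $\epsilon_r$, then updates $\mu_{LB},\mu_{UB}$; as before this runs in $O(1)$ words (a few counters and running means, the current $\mu_{LB},\mu_{UB},r$, and, once a unique survivor remains, the committed best arm).

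For the regret bound I would fix a suboptimal arm $i$ and split the passes at the index $r_i$ where $\epsilon_{r_i}$ first drops below $\Delta_i$. For $r>r_i$ arm $i$ is provably suboptimal and is flipped only $O(\log T/(\mu_{LB}-\mu_i)^2)=O(\log T/\Delta_i^2)$ times per pass, contributing $O(\log T/\Delta_i)$ regret per pass; the number of such passes is $R-r_i$, where $R$ is the final pass (width just above $\Delta$). Plugging the schedule into $r\asymp\frac{\log(1/\epsilon_r)}{\gamma\log\log(1/\epsilon_r)}$ at $\epsilon=\Delta$ and at $\epsilon=\Delta_i$ gives $R-r_i\asymp\frac{\log(\Delta_i/\Delta)}{\gamma\log\log(\Delta_i/\Delta)}$ up to lower order terms, which is the second term in the theorem. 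For $r\le r_i$ arm $i$ is flipped $O(\log T/\epsilon_r^2)$ times, and since the $\epsilon_r$ still decrease geometrically ($f_r\ge2$) this sum is dominated by its last term at $r=r_i$; because the width overshoots $\Delta_i$ by at most a factor $f_{r_i}$, we have $\epsilon_{r_i}\ge\Delta_i/f_{r_i}$, so this contributes $O(\log T\, f_{r_i}^2/\Delta_i)$ regret, and with $r_i\asymp\frac{\log(1/\Delta_i)}{\gamma\log\log(1/\Delta_i)}$ and $\log f_{r_i}\asymp\gamma\log r_i$ we get $f_{r_i}^2=\poly(r_i)=\log^{O(\gamma)}(1/\Delta_i)$; choosing the constant in $\log f_r\asymp\gamma\log r$ appropriately makes this $\log^{\gamma}(1/\Delta_i)$, the first term. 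Summing the two contributions over all $i$ with $\Delta_i>0$ gives the claimed bound.

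The remaining ingredients are carried over verbatim from Theorem~\ref{thm:main_informal}: each confidence test uses a Hoeffding bound with failure probability $\le1/T^3$, there are $O(KR)$ tests, so a union bound shows that with probability $1-o(1)$ the interval always contains $\mu^*$, has width $O(\epsilon_r)$ after pass $r$, and never discards the best arm; on the $o(1)$-probability failure event the regret is at most $T$, contributing $O(1)$ in expectation. Unknown $T$ is handled by the usual doubling device: run the above in epochs of geometrically increasing guessed horizons, restarting the schedule each epoch; there are $O(\log\log T)$ epochs and the regret telescopes so that the last epoch dominates, exactly as in Theorem~\ref{thm:main_informal}.

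The main obstacle I anticipate is the design and verification of the single global schedule $\{f_r\}$: it must be aggressive enough that $R-r_i$ attains $\frac{\log(\Delta_i/\Delta)}{\gamma\log\log(\Delta_i/\Delta)}$ for every arm at once, yet gentle enough that the one-pass overshoot $f_{r_i}^2$ stays within $\log^{\gamma}(1/\Delta_i)$ for every arm, and — most delicately — it must never shrink the interval faster than the samples actually drawn can justify, since an over-aggressive update could either exclude $\mu^*$ (breaking correctness) or fail to exclude a barely-suboptimal arm (breaking the space bound, as that arm must then be re-examined indefinitely). Getting all three constraints to hold with one $r$-indexed sequence, and checking that the lower order terms dropped in the asymptotics for $r(\epsilon)$ really are negligible in the regret, is where the bulk of the work lies.
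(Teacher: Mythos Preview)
Your approach is essentially the paper's: change only the precision schedule to a super-geometric decay, split each arm's regret at the round $r_i$ where the precision first drops below $\Delta_i$, bound the early part by the one-step overshoot and the late part by counting remaining rounds, and handle unknown $T$ by squaring the horizon. The one cosmetic difference is that the paper parametrizes the shrink factor by the current precision rather than by the round index, setting $g_{r+1}=g_r/\bigl(2(\log(1/g_r))^{\gamma/2}\bigr)$; this makes the ``number of rounds from precision $\Delta_i$ down to $\Delta$'' a direct application of one lemma instead of a difference $R-r_i$, but the two schedules coincide asymptotically and your subtraction argument goes through. Your anticipated obstacle is not real: the interval after round $r$ has width $g_r$ because you just sampled to that precision, independently of how aggressive $g_{r+1}$ is, and the algorithm only ever stores the current pass's top two arms, so neither correctness nor the space bound is at risk from the schedule.
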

In particular, if we set $\gamma = 1/2$, we can find that this algorithm is always within an $O\left(\frac{\log(1/\Delta)}{\log \log(1/\Delta)}\right)$ factor of the space-unlimited UCB algorithm.

The paper is presented in the following manner. Section~\ref{sec:related}
reviews the related work.
 Section~\ref{sec:preli} provides
detailed preliminaries of problem formulation and the background
needed for our result. Section~\ref{sec:ucb} and \ref{sec:ucb3} contains the algorithm that gives the result (\RN{1}) and (\RN{2}) of Theorem~\ref{thm:main_informal} with known time horizon $T$, respectively.  Section~\ref{sec:unknown_T} demonstrates how to extend the algorithms to the case with unknown time horizon. The full version is available at \url{https://arxiv.org/pdf/1712.09007}.

\section{Related Works}\label{sec:related}

For stochastic bandits, the seminal work by \cite{LR85} demonstrated the idea of using the confidence intervals to solve the problem, and it showed that the lower bound of the regret is $\Omega(\sum \frac{\Delta_i \log T}{\mathrm{KL}(\mu_i,\mu_*)})$. The UCB algorithm, which is a simple solution to stochastic bandits, was analyzed in \cite{ACF02}. The UCB algorithm is based on Hoeffding's inequality, which is optimal when $\mathrm{KL}(\mu_i,\mu_*)\approx \Delta_i^2$.  In certain situations this can be improved using different types of concentration inequalities; for example, \cite{AMS09} used Bernstein's inequality to derive an algorithm with regret depending on the second moments.  Later, \cite{GC11} and \cite{MMG11} independently proposed the KL-UCB algorithm that matches the lower bound. We refer to the reader the comprehensive survey by \cite{BC12} for general bandit problems.

In addition to regret analysis for online decision making, there is a set of papers that discuss the sample complexity for the pure exploration problem, i.e. how to identify the best arm  \citep{MT04,EMM02,JMN14,KKS13,KCG15,EMM06}.  
Similar algorithms has been used in the regime of online decision making \citep{BJM11, AO10}.  With the idea of the best arm identification, the explore-then-commit (ETC) policy is designed to first performs some tests to identify the best arm, and then commit to it in the remaining time horizon.  The ETC policy is shown to be suboptimal \citep{garivier2016explore} but simplifies the analysis.   In particular, our algorithm is based on the framework by \cite{AO10}, but our algorithm takes only $O(1)$ space while the method by \cite{AO10} takes $O(K)$ space.


Moreover, there is a small set of papers that integrates the sketching techniques from streaming and online learning \citep{HS09,LAC16}.   \cite{HS09} considered the problem of minimizing $\alpha$-exp-concave losses, and the regret is required to be $O(\log T)$ uniformly over time.  They used the idea from streaming to keep a small active set of experts.  \cite{LAC16} considered the online convex optimization problem, and they used the ideas of sketching to reduce the efficiency for computing online Newton steps, however, the complexity is still $\Omega(K)$. 
\section{Preliminary}\label{sec:preli}
\paragraph{Notations}
For any positive integer $n$, we use $[n]$ to denote the set $\{1,2,\cdots,n\}$. 
For random variable $X$, let $\mathbb{E}[X]$ denote its expectation of $X$ (If this quantity exists). 
In addition to $O(\cdot)$ notation, for two functions $f,g$, we use the shorthand $f\lesssim g$ (resp. $\gtrsim$) to indicate that $f\leq C g$ (resp. $\geq$) for an absolute constant $C$. We use $f\eqsim g$ to mean $cf\leq g\leq Cf$ for constants $c,C$.

We measure space in words using the word RAM model, so that the input values (such as K, T, and rewards) and variables can each be expressed in $O(1)$ word of space in $O(\log(KT))$ bits. For more details of word RAM model, we refer the readers to \cite{ahu74,clrs09}.

\subsection{Problem Formulations} \label{sec:problem_formulation}
\begin{definition}
For a \emph{multi-armed bandit problem}, there are $K$ arms in total, and a finite time horizon $1,2,\dots,T$.  At each time step $t \in [T]$, the player has to choose an arm $I_t \in [K]$ to play, and receives a reward $X_{i,t}$ associate to that arm.  Without loss of generality, assume that for each arm $i \in [K]$ and each time step $t \in [T]$, $X_{i,t} \in [0,1]$.  We denote the arm that player chooses at time $t$ as $I_t$. The goal of the player is to maximize the total reward he is getting.  We will measure the performance of an algorithm via its \emph{regret}, which is defined as the difference between the best reward in the hindsight and the reward received with the algorithm:
$$ \Psi_T = \max_{i \in [K]} \left( \sum_{t=1}^{T} X_{i,t} - \sum_{t=1}^{T} X_{I_t,t} \right). $$
\end{definition}

In this paper, we consider the stochastic setting, where we assume the rewards are coming from some stochastic processes.
\begin{definition}
In a \emph{stochastic bandit}, we assume each arm $i \in [K]$ is associated with a distribution ${\cal D}_i$ over $[0,1]$, with mean $\mu_i$.  The reward $X_{i,t}$ at time $t \in [T]$ is drawn from ${\cal D}_i$ independently.
\end{definition}

For stochastic bandits, instead of using the regret defined above, we will consider the \emph{pseudo regret}:
$$ \ov{\Psi}_T = \max_{i \in [K]} \left( \E  \left[ \sum_{t=1}^{T} X_{i,t} \right] - \E \left[ \sum_{t=1}^{T} X_{I_t,t} \right] \right) .$$
We can rewrite the pseudo regret using  Wald's identity:
\begin{equation} \label{eq:pregret_wald}
	\ov{\Psi}_T = \max_{i \in [K]} \left( \sum_{j=1}^{K} \E \left[ N_{j,T} \Delta_{ij} \right] \right),
\end{equation}
where $N_{j,T}$ is the number of times arm $j$ is chosen up to time $T$,
and we define $\Delta_{ij} = \mu_i - \mu_j$ to be the gap between the means of arm $i$ and arm $j$.  
We use $\mu_{*}$ to denote the mean reward for the arm with the highest mean, i.e., $\mu_* = \max_{i\in [K] } ~ \mu_i$.



\subsection{Concentration Inequalities}
In this paper, for simplicity, we will use Chernoff-Hoeffding inequality to analyze the concentration behavior for random variables with bounded support.
\begin{fact}[Chernoff-Hoeffding Bound]
	Let $x_1$, $x_2$, $\dots$, $x_n$ be i.i.d. random variables in $[0,1]$.  Let $ X = \frac{1}{n}\sum_{i=1}^{n} x_i$.  Then for any $\epsilon > 0$,
	\begin{align*}
		\Pr\left[ | X - \E [X] | > \epsilon \right] \le 2e^{-2n\epsilon^2}.
	\end{align*}
\end{fact}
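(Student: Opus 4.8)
The plan is to prove the two-sided inequality by controlling each tail separately. It suffices to show $\Pr[X - \E[X] > \epsilon] \le e^{-2n\epsilon^2}$; applying the same argument to the variables $1 - x_i$ (which also lie in $[0,1]$) gives the symmetric bound $\Pr[X - \E[X] < -\epsilon] \le e^{-2n\epsilon^2}$, and a union bound over the two events produces the factor $2$. So the entire task reduces to a single upper-tail estimate.

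For the upper tail I would use the exponential-moment (Chernoff) method. For any $s > 0$, Markov's inequality applied to the nonnegative random variable $e^{s(X - \E[X])}$ yields $\Pr[X - \E[X] > \epsilon] \le e^{-s\epsilon}\, \E[e^{s(X - \E[X])}]$. Writing $X - \E[X] = \frac1n \sum_{i=1}^n (x_i - \E[x_i])$ and invoking independence, the moment generating function factors as $\E[e^{s(X - \E[X])}] = \prod_{i=1}^n \E[e^{(s/n)(x_i - \E[x_i])}]$, so it remains only to bound each factor.

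The key ingredient is Hoeffding's lemma: if $Y$ is a random variable with $\E[Y] = 0$ and $Y \in [a,b]$ almost surely, then $\E[e^{\lambda Y}] \le e^{\lambda^2 (b-a)^2/8}$ for every $\lambda \in \R$. Granting this with $Y = x_i - \E[x_i]$, whose range $[-\E[x_i],\, 1 - \E[x_i]]$ has length $1$, each factor is at most $e^{(s/n)^2/8}$, hence $\E[e^{s(X - \E[X])}] \le e^{s^2/(8n)}$ and $\Pr[X - \E[X] > \epsilon] \le e^{-s\epsilon + s^2/(8n)}$. Minimizing the exponent over $s$ at $s = 4n\epsilon$ gives exactly $e^{-2n\epsilon^2}$, completing the argument modulo the lemma.

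The one nontrivial step, and the one I expect to require the most care, is Hoeffding's lemma itself. I would prove it by setting $\psi(\lambda) = \log \E[e^{\lambda Y}]$ and checking $\psi(0) = 0$, $\psi'(0) = \E[Y] = 0$, and $\psi''(\lambda) \le (b-a)^2/4$ for all $\lambda$. The last bound is the heart of it: $\psi''(\lambda)$ equals the variance of $Y$ under the exponentially tilted law $d\nu_\lambda \propto e^{\lambda y}\, d\mu_Y(y)$, which is still supported in $[a,b]$ and therefore has variance at most $((b-a)/2)^2$, since any random variable confined to an interval of length $\ell$ has variance at most $\ell^2/4$ (achieved by the two-point distribution on the endpoints). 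A second-order Taylor expansion of $\psi$ about $0$ with the integral form of the remainder then gives $\psi(\lambda) \le \lambda^2 (b-a)^2 / 8$, which is the claim. Everything else — the factorization, the union bound, and the optimization over $s$ — is routine.
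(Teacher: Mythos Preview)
Your proof is correct and is the standard textbook derivation via the Chernoff method and Hoeffding's lemma. Note, however, that the paper does not supply its own proof of this statement: it is recorded as a \emph{Fact} and used as a black box, so there is no in-paper argument to compare against. Your write-up would serve perfectly well as a self-contained justification should one be desired.
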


\section{UCBConstSpace with known T}\label{sec:ucb}
The original UCB-1 algorithm \citep{ACF02} needs $O(K)$ space to achieve $O(\sum_{i:\Delta_i>0} \frac{1}{\Delta_i} \log T )$ regret. In this section, we propose a new algorithm which requires only $O(1)$ space in exchange for a slightly worse regret. 

First, we consider the setting where $T$ is known.  The main result is presented in the following theorem.

\begin{theorem}\label{thm:ucb}
Given a stochastic bandit instance with known $T$, let $\Delta_i = \mu_* - \mu_i$, and let $\Delta = \min_{i:\Delta_i>0} \Delta_i$. Then for any $T>0$, there exists an algorithm that uses $O(1)$ words of space and achieves regret
\begin{align*}
O \left( \sum_{i:\Delta_i>0} \frac{1}{\Delta_i}  \log (\Delta_i/\Delta ) \log T \right).
\end{align*}
\end{theorem}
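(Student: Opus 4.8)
\emph{Proof plan.} The plan is to build an explore-then-commit algorithm in the spirit of \cite{AO10} that simultaneously (i) maintains a geometrically shrinking interval $[\mu_{LB},\mu_{UB}]\ni\mu_*$, and (ii) exploits this interval to discard suboptimal arms cheaply, using only a constant number of word-sized registers. The algorithm would run in phases $\ell=0,1,2,\dots$; at the start of phase $\ell$ it holds an interval of width $\eps_\ell:=2^{-\ell}$ that (on a good event) contains $\mu_*$, with $[\mu_{LB},\mu_{UB}]=[0,1]$ initially. In phase $\ell$ it makes one pass over the arms $i=1,\dots,K$: for arm $i$ it pulls samples one at a time, tracking the running empirical mean $\hat\mu_i$ and Hoeffding radius $r_n=\sqrt{c\log T/n}$, and stops pulling this arm as soon as either (a) $\hat\mu_i+r_n<\mu_{LB}$ (we are confident $\mu_i<\mu_*$, so we skip it) or (b) its count reaches $n_\ell:=\Theta(\eps_\ell^{-2}\log T)$, in which case $\hat\mu_i$ is accurate to within $\eps_\ell/4$ and we call arm $i$ \emph{alive}. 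During the pass we maintain, in $O(1)$ space, the largest $\hat\mu_{\max}$ of $\hat\mu_i$ over alive arms, the index attaining it, and the number of alive arms. At the end of the phase we replace the interval by $[\hat\mu_{\max}-\eps_\ell/4,\ \hat\mu_{\max}+\eps_\ell/4]$, set $\eps_{\ell+1}=\eps_\ell/2$, and continue; as soon as exactly one arm is alive we commit to it for all remaining steps, and if $T$ elapses first we stop (the regret is then trivially at most $T$).

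\emph{Validity of the interval.} Let $\mathcal E$ be the event that every use of Hoeffding's inequality above succeeds; since there are at most $O(T)$ pulls in total and we invoke the bound once per pull with failure probability $\le 2T^{-2c}$, a union bound gives $\Pr[\neg\mathcal E]\le O(1/T^2)$ for $c$ a large enough constant. On $\mathcal E$ the optimal arm $i^*$ has $\hat\mu_{i^*}+r_n\ge\mu_{i^*}=\mu_*\ge\mu_{LB}$ at every count $n$, so condition (a) never fires for $i^*$ and it is alive in every phase; meanwhile every alive arm $i$ reached count $n_\ell$, so $\hat\mu_i\le\mu_i+\eps_\ell/4\le\mu_*+\eps_\ell/4$, while $\hat\mu_{\max}\ge\hat\mu_{i^*}\ge\mu_*-\eps_\ell/4$. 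Hence $|\hat\mu_{\max}-\mu_*|\le\eps_\ell/4$, so the new interval has width $\eps_\ell/2=\eps_{\ell+1}$ and still contains $\mu_*$; in particular $\mu_*-\mu_{LB}\le\eps_\ell$ during phase $\ell$. By induction this gives both invariants, and on $\neg\mathcal E$ the extra regret is at most $T\cdot\Pr[\neg\mathcal E]=O(1)$.

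\emph{Regret accounting (the crux).} Fix a suboptimal arm $i$ and let $\ell_i:=\Theta(\log(1/\Delta_i))$ be the first phase with $\eps_\ell\le\Delta_i/C$ for a suitable absolute constant $C$. For $\ell\ge\ell_i$ we have $\mu_{LB}-\mu_i\ge\Delta_i-\eps_\ell\ge\Delta_i/2$ (using $\mu_*-\mu_{LB}\le\eps_\ell$), so on $\mathcal E$ condition (a) fires once arm $i$'s count exceeds $O(\log T/\Delta_i^2)$; for $\ell<\ell_i$ arm $i$ is pulled at most $n_\ell=\Theta(\eps_\ell^{-2}\log T)$ times, and $\sum_{\ell<\ell_i}n_\ell=O(n_{\ell_i})=O(\log T/\Delta_i^2)$ since the geometric series is dominated by its last term. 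Every suboptimal arm satisfies $\ell_i\le\ell^*:=\Theta(\log(1/\Delta))$, so after phase $\ell^*$ only $i^*$ is alive and the algorithm commits (if several arms are optimal it simply keeps pulling optimal arms, incurring no regret); hence arm $i$ is pulled at most $O\!\left(\tfrac{\log T}{\Delta_i^2}\bigl(1+(\ell^*-\ell_i)\bigr)\right)=O\!\left(\tfrac{\log T}{\Delta_i^2}\log\tfrac{\Delta_i}{\Delta}\right)$ times in total (reading the last factor as $\max\{1,\cdot\}$). Multiplying by $\Delta_i$, summing over $i$ with $\Delta_i>0$, and adding the $O(1)$ term from $\neg\mathcal E$ yields regret $O\!\left(\sum_{i:\Delta_i>0}\tfrac1{\Delta_i}\log\tfrac{\Delta_i}{\Delta}\log T\right)$; after commitment only $i^*$ is pulled. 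For the space bound the algorithm stores only $\ell$ (hence $\eps_\ell$), the two interval endpoints, the current arm index with its running count and empirical sum, and $\hat\mu_{\max}$ with the index attaining it and the count of alive arms --- a constant number of words in the word-RAM model. I expect the bulk of the work to be exactly this accounting, together with pinning down the constants ($c$ in $r_n$, $C$ in $\ell_i$, and the $\eps_\ell/4$ slack) so that, on $\mathcal E$, the optimal arm provably survives every phase while every suboptimal arm is provably eliminated by phase $\ell_i$.
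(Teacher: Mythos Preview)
Your proposal is correct and follows essentially the same approach as the paper: geometrically halving precision $\eps_\ell$ (the paper's $g_r$), a per-phase pass that eliminates arms whose upper confidence bound falls below a running lower bound on $\mu_*$, and a regret split at the first phase $\ell_i\approx\log(1/\Delta_i)$ into a geometric part (dominated by its last term, $O(\log T/\Delta_i^2)$) and a ``late'' part of $O(\log T/\Delta_i^2)$ per phase for $\ell^*-\ell_i=O(\log(\Delta_i/\Delta))$ remaining phases. The only cosmetic differences are that the paper stores the best and second-best empirical means and stops when their confidence intervals separate, whereas you count alive arms and stop when exactly one remains; and the paper expresses the lower bound as $\bar\mu'-g_{r-1}/2$ from the previous round's best arm rather than as an explicit interval endpoint $\mu_{LB}$. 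These are equivalent and the regret accounting is the same.
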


\begin{algorithm*}[t]\caption{UCB algorithm with constant space and known $T$ (Theorem~\ref{thm:ucb})}
	\label{alg:ucb_known_T}
	\begin{algorithmic}[1]{
		\Procedure{\textsc{UCBConstSpace}}{$K,T$}
		\State Set $\delta \gets 1/T^3$, initialize $g_1 \leftarrow \frac{1}{2}$, $t\leftarrow 1$
		\State {\bf Exploration Phase:}
		\For{rounds $r=1,2,\dots$}
			\State $a'$: the best arm in the previous round, $\bar{\mu}'$: mean reward for arm $a'$ in the previous round
			\State $N \leftarrow \lceil 2  \log(1/\delta)  / g_r^2 \rceil $, which is the maximum number of plays for each arm in the current round
			\State Initialize $a,b\gets 0$, which are the best and the second best arm in this round
			\State Initialize $\bar \mu_a, \bar \mu_b \gets 0$, which are the means for arms $a$ and $b$
			\For{each arm $i=1\to K$}
				\State Set $\bar \mu \gets 0$, which keeps the mean reward for arm $i$ in the current round
				\For{$n=1\to N$}
					\State Pull arm $i$ and receive reward $v$
					\State $t \gets t+1$
					\State Update $\bar \mu$ with $v$: $\ov{\mu} \leftarrow ( \ov{\mu} \cdot (n-1) + v ) /n$
					\If {$\ov{\mu} + \sqrt{\log(1/\delta)/2n}  < \ov{\mu}' - g_{r-1}/2 $}
						\State {\bf break}, i.e. we \emph{rule out} arm $i$ for the current round
					\EndIf
				\EndFor
				\State {\bf if} $\bar \mu> \bar \mu_a$ {\bf then} $b \gets a$, $\bar \mu_b \gets \bar \mu_a$, $a \gets i$ and $\bar \mu_a \gets \bar \mu$ \Comment{Update the best and the 2nd best arms}
				\State {\bf else if} $\bar \mu> \bar \mu_b$ {\bf then} $b \gets i$ and $\bar \mu_b \gets \bar \mu$ \Comment{Update the 2nd best arm}
			\EndFor
			\State {\bf Stopping Criterion: if} $\bar \mu_a - g_r/2 > \bar \mu_b + g_r/2$ or $t>T$ {\bf then break} \label{algl:loop_until}
			\State Update $a' = a$ and $\bar \mu' = \bar \mu_{a}$
			\State {\bf Set new precision:} $g_{r+1} = g_{r}/2$ \label{algl:update_Delta}
		\EndFor
		\State {\bf Exploitation Phase:}
		\State Pull arm $a$ for the remaining time steps.
		\EndProcedure}
	\end{algorithmic}
\end{algorithm*}

We present the method in Algorithm~\ref{alg:ucb_known_T}, where we iteratively improve our estimation of $\Delta$.  More precisely, we scan through the data multiple rounds.  In the $r$-th round, we sample each arm up to some precision $g_r$.  The desired precision $g_r$ is halved after each round.  In this sampling process, we only keep the information of the best arm and the second best arm seen in the current and the previous round, instead of saving those from all arms.  With the information of the best arm and the current precision $g_r$, we can refine the upper and lower bound $\mu_{UB}$ and $\mu_{LB}$ on $\mu_*$.  If an arm whose upper confidence value is less than $\mu_{LB}$, we can rule it out without continuing to $g_r$ precision.  This process is terminated if we are able to determine the best arm with the rest arms.

We define $a^{(r)}$ and $b^{(r)}$ as the best arm and the second best arm stored at the end of the $r$-th round.  Also, we let $\bar{\mu}_i^{(r)}$ to be the recorded empirical mean at the end of the $r$-th round for arm $i$.  Denote $n_{i}^{(r)}$ as the total number of pulls of arm $i$ at the $r$-th round. Then, we define $\bar{\mu}_{i,n}^{(r)}$ as the empirical mean $\bar{\mu}_i$ stored for arm $i$ after pulling it for $n$ times in round $r$.  Further, we define $r_{\max}$ as the value of $r-1$ at the moment the algorithm exits the loop in Line~\ref{algl:loop_until}. 

\begin{definition}
For each $r \in [r_{\max}]$, define the event $\xi_r$ to be the event:
$
 \exists r' \in [r], \exists i \in [K], \exists n \in [n_i^{(r')}] 
$
such that
$
 |\bar{\mu}_{i,n}^{(r')} - \mu_i| > \sqrt{ \log(1/\delta) /(2n)} 
$,
i.e., there exists some estimate of $\overline{\mu}_{i,n}^{(r')}$ that is not within our desired confidence interval up to round $r$. 
\end{definition}
 Throughout the first part of our analysis, we focus on the case when $\neg \xi_r$ holds when we are discussing the state of the algorithm at round $r$, i.e., all estimates are within our desired confidence interval.
\begin{lemma} \label{lem:ucb_error_round}
	In Algorithm~\ref{alg:ucb_known_T}, at any round $r \in [r_{\max}]$, given $\neg \xi_r$, the following statements are true: \\
		1. $n_{a^{(r)}}^{(r)} = \lceil ( 2\log(1/\delta) ) / g_r^2  \rceil$, i.e. the claimed optimal arm cannot be ruled out early. \\
		2. $n_*^{(r)} = \lceil  (2\log(1/\delta) )/ g_r^2 \rceil$, i.e. the true optimal arm cannot be ruled out early. \\
		3. $|\bar{\mu}_{a^{(r)}}^{(r)} - \mu_*| \le g_r/2$.
\end{lemma}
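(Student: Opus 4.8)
The plan is to prove the three parts together by induction on $r$, carrying as the induction hypothesis the statement of Part~3 for round $r-1$, namely $|\bar\mu' - \mu_*| \le g_{r-1}/2$ where $\bar\mu' = \bar\mu_{a^{(r-1)}}^{(r-1)}$ is the stored mean of the previous round's best arm. (For $r=1$ there is no previous round, the ruling-out test is vacuous, so every arm is played all $N = \lceil 2\log(1/\delta)/g_1^2\rceil$ times, giving Parts~1 and~2, and Part~3 follows from the computation used below.) Fix an index $i^*$ with $\mu_{i^*}=\mu_*$ and condition on $\neg\xi_r$ throughout. To prove Part~2, note the induction hypothesis gives $\bar\mu' - g_{r-1}/2 \le \mu_*$, so under $\neg\xi_r$ after any $n$ plays of arm $i^*$ in round $r$ we have $\bar\mu_{i^*,n}^{(r)} + \sqrt{\log(1/\delta)/(2n)} \ge \mu_* \ge \bar\mu' - g_{r-1}/2$; hence the ruling-out inequality never fires for $i^*$ and $n_*^{(r)} = \lceil 2\log(1/\delta)/g_r^2\rceil$.

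For Part~1 the crucial structural fact is that at the end of round $r$ the stored variable $a$ is the arm achieving $\max_{i\in[K]}\bar\mu_i^{(r)}$ (the update rule keeps a running maximum), so in particular, using Part~2 and $\neg\xi_r$,
\[
\bar\mu_{a^{(r)}}^{(r)} \ \ge\ \bar\mu_{i^*,N}^{(r)} \ \ge\ \mu_* - \sqrt{\log(1/\delta)/(2N)} \ \ge\ \mu_* - g_r/2 .
\]
If $a^{(r)}=i^*$ then Part~1 is immediate from Part~2, so assume otherwise and suppose for contradiction that $a^{(r)}$ is ruled out after $n<N$ plays. Then its recorded mean equals $\bar\mu_{a^{(r)},n}^{(r)}$, and the ruling-out inequality together with $\bar\mu' \le \mu_* + g_{r-1}/2$ (induction hypothesis) gives $\bar\mu_{a^{(r)},n}^{(r)} + \sqrt{\log(1/\delta)/(2n)} < \bar\mu' - g_{r-1}/2 \le \mu_*$. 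Combining with $\bar\mu_{a^{(r)},n}^{(r)} = \bar\mu_{a^{(r)}}^{(r)} \ge \mu_* - g_r/2$ from the display yields $\sqrt{\log(1/\delta)/(2n)} < g_r/2$, i.e. $n > 2\log(1/\delta)/g_r^2$; but $n \le N-1 = \lceil 2\log(1/\delta)/g_r^2\rceil - 1 < 2\log(1/\delta)/g_r^2$, a contradiction. Hence $n_{a^{(r)}}^{(r)} = N$.

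Part~3 then drops out with no new idea: the lower bound $\bar\mu_{a^{(r)}}^{(r)} \ge \mu_* - g_r/2$ is the display above, and by Part~1 arm $a^{(r)}$ is played $N$ times, so $\neg\xi_r$ gives $\bar\mu_{a^{(r)}}^{(r)} \le \mu_{a^{(r)}} + \sqrt{\log(1/\delta)/(2N)} \le \mu_* + g_r/2$ using $\mu_{a^{(r)}} \le \mu_*$ and $N \ge 2\log(1/\delta)/g_r^2$. This closes the induction.

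I expect the main obstacle to be Part~1 (equivalently, the lower bound $\bar\mu_{a^{(r)}}^{(r)} \ge \mu_* - g_r/2$, which powers both Part~1 and the lower side of Part~3): it needs us to use simultaneously that the stored best arm is the arithmetic argmax over \emph{all} $K$ arms — so its recorded mean cannot fall below that of the true optimal arm — that the previous-round estimate $\bar\mu'$ brackets $\mu_*$ from \emph{both} sides within $g_{r-1}/2$, and the exact ceiling in the definition of $N$, so that ``ruled out strictly before the $N$-th play'' becomes arithmetically impossible. Everything else is a routine application of the Hoeffding confidence bounds packaged into $\neg\xi_r$.
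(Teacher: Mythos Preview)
Your proof is correct and follows the same inductive scheme as the paper: carry Part~3 at round $r-1$ as the hypothesis, establish Part~2 first, then Part~1, then Part~3, and close the induction. Your presentation is actually more explicit than the paper's in one place: for Part~1 the paper simply asserts that because arm $*$ is played the full $N$ times, the empirical best arm $a^{(r)}$ must be as well, whereas you spell out the contradiction via $\bar\mu_{a^{(r)}}^{(r)}\ge \bar\mu_{i^*,N}^{(r)}\ge \mu_*-g_r/2$ and the arithmetic with $N=\lceil 2\log(1/\delta)/g_r^2\rceil$; and for Parts~2 and~3 you give direct arguments where the paper argues by contradiction in each direction, but these are equivalent.
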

\begin{proof}
	We prove this lemma by induction.  For the base case, the first and the second statement are true because all arms have to be played for $\lceil \frac{2\log(1/\delta)}{(g_1)^2} \rceil$ times.  For the third statement, we prove by contradiction.  Assume the contrary, i.e. $\bar{\mu}_{a^{(1)}}^{(1)} - \mu_* > g_1/2$ or $\mu_* - \bar{\mu}_{a^{(1)}}^{(1)} > g_1/2$.  If $\bar{\mu}_{a^{(1)}}^{(1)} - \mu_* > g_1/2$, then we have
	\begin{align*}
	\mu_* &< ~ \bar{\mu}_{a^{(1)}}^{(1)} - g_1/2 &\\
	&\le ~\mu_{a^{(1)}} + \sqrt{ \log(1/\delta) / (2 n_{a^{(1)}}^{(1)}} )  - g_1/2 \\
	&\le~ \mu_{a^{(1)}} + g_1/2 - g_1/2 \\
	&= ~\mu_{a^{(1)}}
	\end{align*}
	where the second step follows by condition $\neg\xi_r$ and the third step follows by $n_{a^{(1)}}^{(1)}\geq   ( 2\log(1/\delta) ) /g_1^2 $.
	
	The above equation leads to a contradiction because $\mu_*>\mu_{i}$ for any $i\not=*$.  Similarly, if $\mu_* - \bar{\mu}_{a^{(1)}}^{(1)} > g_1/2$, then we have
	\begin{align*}
	\bar{\mu}_{a^{(1)}}^{(1)}  &< \mu_* - g_1/2 &\\
	&\le \bar{\mu}_{*}^{(1)} + \sqrt{ \log(1/\delta) /( 2 n_{*}^{(1)}}) - g_1/2 \\
	&\le \bar{\mu}_{*}^{(1)} + g_1/2 - g_1/2 \\
	&= \bar{\mu}_{*}^{(1)}
	\end{align*}
	where the second step follows by condition $\neg \xi_r$, and the third step follows by $n_{*}^{(1)}\ge (2\log(1/\delta) )/g_1^2 $.
	
	The above equation also results in a contradiction because for any $i\not=*$ to be assigned as $a^{(1)}$, we must have $\bar{\mu}_{a^{(1)}}^{(1)} > \bar{\mu}_{*}^{(1)}$.
	
	For the induction step, we assume these three statements are true for $r \le r'-1$.  Now consider $r=r'$.  We first prove the second statement.  Assume the contrary, i.e. the true optimal arm has been ruled out early, meaning
	\begin{equation} \label{eq:opt_arm_ruled_out_early_cond}
	\bar{\mu}_{*}^{(r)} + \sqrt{ \log(1/\delta) / ( 2 n_{*}^{(r)}} ) < \bar{\mu}_{a^{(r-1)}}^{(r-1)} - g_{r-1}/2
	\end{equation}
	Then, we can see that
	\begin{align}
	\mu_* &\le ~ \bar{\mu}_{*}^{(r)} + \sqrt{\log(1/\delta) / (2 n_{*}^{(r)}}) 
	\nonumber\\
	&< ~ \bar{\mu}_{a^{(r-1)}}^{(r-1)} - g_{r-1}/2
	\nonumber\\
	&\le ~\mu_{a^{(r-1)}} \label{eq:lem1_contradict1}
	\end{align}
	where in the last inequality, we use the induction hypothesis, $n_{a^{(r-1)}}^{(r-1)} \ge \frac{2 \log(1/\delta)}{g_{r-1}^2}$ and then
	\begin{align*}
	\mu_{a^{(r-1)}} \ge \bar{\mu}_{a^{(r-1)}}^{(r-1)} - \sqrt{\frac{\log(1/\delta)}{2n_{a^{(r-1)}}^{(r-1)}}} \ge \bar{\mu}_{a^{(r-1)}}^{(r-1)} - g_{r-1}/2
	\end{align*}
	There is a contradiction in \eqref{eq:lem1_contradict1} because we must have $\mu_* \ge \mu_{a^(r-1)}$.  Hence the second statement is true.
	
	Next, we can see that the first statement is now clear because we have shown that there is at least one arm that is going to pull for $\lceil \frac{2\log(1/\delta)}{g_r^2} \rceil$ times at the $r$-th round (which is arm $*$ according to the second statement we have just shown).  This means that if arm $a^{(r)}$ is not arm $*$, then it has to be pulled for $\lceil \frac{2\log(1/\delta)}{g_r^2} \rceil$ times as well.

	For the third statement, the proof is similar to the base case, where we prove by contradiction.  Assume the contrary, i.e. $\bar{\mu}_{a^{(r)}}^{(r)} - \mu_* > g_r/2$ or $\mu_* - \bar{\mu}_{a^{(r)}}^{(r)} > g_r/2$.
	
	If $\bar{\mu}_{a^{(r)}}^{(r)} - \mu_* > g_r/2$, then we have
	\begin{align*}
	\mu_* &< \bar{\mu}_{a^{(r)}}^{(r)} - g_r/2 &\\
	&\le \mu_{a^{(r)}} + \sqrt{ \log(1/\delta) / ( 2 n_{a^{(r)}}^{(r)}}) - g_r/2 \\
	&\le \mu_{a^{(r)}} + g_r/2 - g_r/2 \\
	&= \mu_{a^{(r)}}
	\end{align*}
	where the second step follows by condition~$\neg \xi_r$, and the third step follows by $n_{a^{(r)}}^{(r)}\ge \frac{2\log(1/\delta)}{g_r^2} $(the first statement).
	
	This results in a contradiction because $\mu_*\ge\mu_{i}$ for any $i\in[K]$.  Similarly, if $\mu_* - \bar{\mu}_{a^{(r)}}^{(r)} > g_r/2$, then we have
	\begin{align*}
	\bar{\mu}_{a^{(r)}}^{(r)}  &< \mu_* - g_r/2 &\\
	&\le \bar{\mu}_{*}^{(r)} + \sqrt{ \log(1/\delta) / ( 2 n_{*}^{(r)}}) - g_r/2 \\
	&\le \bar{\mu}_{*}^{(r)} + g_r/2 - g_r/2 \\
	&= \bar{\mu}_{*}^{(r)}
	\end{align*}
	where the second step follows by condition~$\neg \xi_r$ and the third step follows by $n_{*}^{(r)}\ge \frac{2\log(1/\delta)}{g_r^2} $(the second statement). 
	
	This results in a contradiction because for any $i\not=*$ to be assigned as $a^{(r)}$, we must have $\bar{\mu}_{a^{(r)}}^{(r)} > \bar{\mu}_{*}^{(r)}$, otherwise we will have $|\bar{\mu}_{*}^{(r)} - \mu_*| \le g_r/2$ by condition $\neg \xi_r$.
\end{proof}

\begin{lemma} \label{lem:ucb_max_R}
	In Algorithm~\ref{alg:ucb_known_T}, conditioning on event $\neg \xi_{r_{\max}}$ holds, we have $r_{\max} \le \lceil \log(2/\Delta) \rceil$.
\end{lemma}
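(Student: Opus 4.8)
I would argue by contradiction, working throughout on the event $\neg\xi_{r_{\max}}$, which also yields $\neg\xi_{r'}$ for every $r' \le r_{\max}$ (the events $\xi_r$ are monotone in $r$). Suppose $r_{\max} \ge \lceil \log(2/\Delta)\rceil + 1$ and set $r := \lceil \log(2/\Delta)\rceil + 1$; then $g_r = 2^{-r} \le \Delta/4 < \Delta$, and since $r \le r_{\max}$ the algorithm reaches and finishes round $r$ without the stopping criterion on Line~\ref{algl:loop_until} firing (it fires only at round $r_{\max}+1$). Since $r, r-1 \in [r_{\max}]$ and $\neg\xi_r, \neg\xi_{r-1}$ hold, Lemma~\ref{lem:ucb_error_round} applies at both rounds. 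The plan is to show that these facts force the stopping criterion to have fired at round $r$, a contradiction.

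First I would pin down the two stored arms. By Lemma~\ref{lem:ucb_error_round}(1)--(2), both $a^{(r)}$ and the true optimal arm $*$ are played the full $N_r = \lceil 2\log(1/\delta)/g_r^2\rceil$ times, so each of their round-$r$ empirical means lies within the confidence radius $\sqrt{\log(1/\delta)/(2N_r)} \le g_r/2$ of the corresponding true mean. Combined with Lemma~\ref{lem:ucb_error_round}(3), $|\bar{\mu}_{a^{(r)}}^{(r)} - \mu_*| \le g_r/2$, this gives $\mu_{a^{(r)}} \ge \bar{\mu}_{a^{(r)}}^{(r)} - g_r/2 \ge \mu_* - g_r > \mu_* - \Delta$; as every suboptimal arm has gap at least $\Delta$, arm $a^{(r)}$ must be optimal. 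Assuming the best arm is unique (ties merely make the top two arms interchangeable and cost no regret), $a^{(r)} = *$, hence $b^{(r)} \ne *$ and $\Delta_{b^{(r)}} \ge \Delta \ge 4 g_r$.

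Next I would exploit the failure of the stopping criterion at round $r$: from $\bar{\mu}_{a^{(r)}}^{(r)} - g_r/2 \le \bar{\mu}_{b^{(r)}}^{(r)} + g_r/2$ and $\bar{\mu}_{a^{(r)}}^{(r)} \ge \mu_* - g_r/2$ we get $\bar{\mu}_{b^{(r)}}^{(r)} \ge \mu_* - \frac{3}{2} g_r$. I then upper bound $\bar{\mu}_{b^{(r)}}^{(r)}$, splitting on whether $b^{(r)}$ was ruled out early in round $r$. If it was played all $N_r$ times, $\neg\xi_r$ gives $\mu_{b^{(r)}} \ge \bar{\mu}_{b^{(r)}}^{(r)} - g_r/2 \ge \mu_* - 2 g_r$. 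If it was ruled out at pull count $n$, the break test reads $\bar{\mu}_{b^{(r)},n}^{(r)} + \sqrt{\log(1/\delta)/(2n)} < \bar{\mu}_{a^{(r-1)}}^{(r-1)} - g_{r-1}/2$, while Lemma~\ref{lem:ucb_error_round}(3) at round $r-1$ bounds $\bar{\mu}_{a^{(r-1)}}^{(r-1)} \le \mu_* + g_{r-1}/2$, so the right-hand side is at most $\mu_*$; thus $\bar{\mu}_{b^{(r)}}^{(r)} < \mu_* - \sqrt{\log(1/\delta)/(2n)}$, which with $\bar{\mu}_{b^{(r)}}^{(r)} \ge \mu_* - \frac{3}{2} g_r$ forces $\sqrt{\log(1/\delta)/(2n)} < \frac{3}{2} g_r$, and then $\neg\xi_r$ gives $\mu_{b^{(r)}} \ge \bar{\mu}_{b^{(r)},n}^{(r)} - \sqrt{\log(1/\delta)/(2n)} > \mu_* - 3 g_r$. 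In either case $\Delta_{b^{(r)}} < 3 g_r$, contradicting $\Delta_{b^{(r)}} \ge 4 g_r$. This contradiction gives $r_{\max} \le \lceil\log(2/\Delta)\rceil$.

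\textbf{Main obstacle.} The easy case is when $b^{(r)}$ survives all $N_r$ pulls. The case that takes work is when $b^{(r)}$ is ruled out early: then $\bar{\mu}_{b^{(r)}}^{(r)}$ is only a partial average, so its naive confidence radius is not $g_r/2$, and one recovers control only by playing the break inequality off against Lemma~\ref{lem:ucb_error_round}(3)'s bound on the previous round's leader $\bar{\mu}_{a^{(r-1)}}^{(r-1)}$. Keeping the constants aligned is the remaining bookkeeping, and it is the reason the statement sits one round past $\log(1/\Delta)$: choosing $r$ so that $g_r \le \Delta/4$ leaves the slack $3 g_r < \Delta$ that the argument needs.
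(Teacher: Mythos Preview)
Your argument follows the same contradiction strategy as the paper's own proof: assume the stopping criterion on Line~\ref{algl:loop_until} fails at a round where $g_r$ is already below (a constant times) $\Delta$, then combine the criterion's failure with the concentration guarantees to force a suboptimal arm to have gap strictly less than $\Delta$. So at the level of ideas your proposal matches the paper.

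Where you differ is in the bookkeeping, and your version is actually the more complete one. The paper works at round $r=\lceil\log(2/\Delta)\rceil$, writes the failed criterion as $\bar\mu_*^{(r)}-g_r/2<\bar\mu_{a^{(r)}}^{(r)}+g_r/2$, and then bounds $\mu_{a^{(r)}}\ge \bar\mu_{a^{(r)}}^{(r)}-g_r/2$ using that $a^{(r)}$ is played the full $N$ pulls. This tacitly treats the relevant suboptimal arm as one that received all $N$ pulls; but once you observe (as you do) that on $\neg\xi_r$ the empirical leader $a^{(r)}$ must be the true optimal arm, the suboptimal arm in the comparison is $b^{(r)}$, and nothing in the algorithm prevents $b^{(r)}$ from having been ruled out early with a larger confidence radius. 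Your case split handles exactly this: when $b^{(r)}$ breaks early you feed the break inequality through Lemma~\ref{lem:ucb_error_round}(3) at round $r-1$ to get $\sqrt{\log(1/\delta)/(2n)}<\tfrac{3}{2}g_r$, recovering control of $\mu_{b^{(r)}}$. To make the resulting bound $\Delta_{b^{(r)}}<3g_r$ contradict $\Delta_{b^{(r)}}\ge\Delta$, you shift to $r=\lceil\log(2/\Delta)\rceil+1$ so that $g_r\le\Delta/4$; this is one round later than the paper but still proves the same inequality $r_{\max}\le\lceil\log(2/\Delta)\rceil$. In short, the paper's proof is terser but leaves the early-ruleout case implicit; your extra case analysis and the shift by one round are what buys you a fully rigorous argument.
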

\begin{proof}
	Assume the contrary, i.e. at the end of round $r=\lceil \log(2/\Delta) \rceil$, the best arm and the second best arm are still not differentiated, meaning we still have
	\begin{align*} 
	\bar{\mu}_*^{(r)} - g_r/2 &< \bar{\mu}_{a^{(r)}}^{(r)} + g_r/2
	\end{align*}
	First note that $r > \log(2/\Delta)$ implies $2^{-r}=g_r < \Delta/2$.  We have
	\begin{align*}
		\mu_* &\le \bar{\mu}_*^{(r)} + \sqrt{ \log(1/\delta) / ( 2n_*^{(r)}}) \\
		&\le \bar{\mu}_*^{(r)} + g_r/2 < \bar{\mu}_{a^{(r)}}^{(r)} + 3g_r/2\\
		&< \bar{\mu}_{a^{(r)}}^{(r)} + 3\Delta/4
	\end{align*}
	Similarly, we have
	$
		\mu_{a^{(r)}} > \bar{\mu}_{a^{(r)}}^{(r)} - \Delta/4 \label{eq:mu_aR_R_lb}
	$.
	Then, we can show that
	\begin{align*}
		\Delta \le \mu_* - \mu_a \le (\bar{\mu}_{a^{(r)}}^{(r)} + 3\Delta/4) - (\bar{\mu}_{a^{(r)}}^{(r)} - \Delta/4) < \Delta 
	\end{align*} 
	which results in a contradiction.  This implies that given $\neg \xi_{r_{\max}}$, we must have $r_{\max} \le \lceil \log(2/\Delta) \rceil$.
\end{proof}

\begin{lemma} \label{lem:ruled_out1}
	In Algorithm~\ref{alg:ucb_known_T}, at any round $r$, given $\neg \xi_{r}$, the number of plays for any arm $i\in[K]$ is upper-bounded by
	$$ n_i^{(r)} \le  \frac{2\log(1/\delta)}{\left(\Delta_i- g_{r-1} \right)^2} +1.$$
\end{lemma}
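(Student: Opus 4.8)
The plan is to show that, under $\neg\xi_r$, the inner-loop ``rule-out'' test forces any sufficiently-sampled arm to be dropped before it is pulled too many times. I will establish the bound in the regime $\Delta_i > g_{r-1}$, which is exactly where it is informative; for $\Delta_i \le g_{r-1}$ one simply falls back on the hard cap $n_i^{(r)} \le N = \lceil 2\log(1/\delta)/g_r^2 \rceil$ that is built into the algorithm.

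So fix an arm $i$ with $\Delta_i > g_{r-1}$, write $m = n_i^{(r)}$, and suppose $m \ge 2$ (if $m \le 1$ the claimed bound is trivial since its right-hand side is at least $1$). Because arm $i$ is pulled an $m$-th time, the break test evaluated right after pull $m-1$ did not fire, that is, $\bar\mu_{i,m-1}^{(r)} + \sqrt{\log(1/\delta)/(2(m-1))} \ge \bar\mu_{a^{(r-1)}}^{(r-1)} - g_{r-1}/2$. I would now bound the two sides using the good event. On the left, $\neg\xi_r$ gives $\bar\mu_{i,m-1}^{(r)} \le \mu_i + \sqrt{\log(1/\delta)/(2(m-1))}$, so the left side is at most $\mu_i + \sqrt{2\log(1/\delta)/(m-1)}$. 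On the right, since the events $\xi_\cdot$ are monotone in $r$ we have $\neg\xi_{r-1}$ as well, so Lemma~\ref{lem:ucb_error_round}(3) applied at round $r-1$ gives $\bar\mu_{a^{(r-1)}}^{(r-1)} \ge \mu_* - g_{r-1}/2$, hence the right side is at least $\mu_* - g_{r-1}$. Combining, $\mu_i + \sqrt{2\log(1/\delta)/(m-1)} \ge \mu_* - g_{r-1}$, i.e. $\sqrt{2\log(1/\delta)/(m-1)} \ge \Delta_i - g_{r-1}$; since $\Delta_i - g_{r-1} > 0$, squaring and rearranging yields $m - 1 \le 2\log(1/\delta)/(\Delta_i - g_{r-1})^2$, which is the desired bound on $n_i^{(r)} = m$.

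The step that needs a little care is not the arithmetic but the bookkeeping: the test the algorithm runs after pull $m-1$ uses the \emph{previous} round's precision $g_{r-1}$ and stored mean $\bar\mu_{a^{(r-1)}}^{(r-1)}$, so the bound naturally involves $g_{r-1}$ (not $g_r$), and to pin down $\bar\mu_{a^{(r-1)}}^{(r-1)}$ I must invoke Lemma~\ref{lem:ucb_error_round}(3) at round $r-1$, which is why $\neg\xi_{r-1}$ — obtained for free from $\neg\xi_r$ — is needed. For $r = 1$ there is no previous round and the test never fires, so $n_i^{(1)} = N$ and the statement is understood via the trivial cap. Apart from this, everything reduces to feeding the Chernoff--Hoeffding guarantee packaged in $\neg\xi_r$ and the $|\bar\mu_{a^{(r-1)}}^{(r-1)} - \mu_*| \le g_{r-1}/2$ guarantee from the previous lemma into the one inequality that says ``the break did not happen.''
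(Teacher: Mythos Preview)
Your argument is correct and is essentially identical to the paper's own proof: both use that the break test did not fire at pull $n_i^{(r)}-1$, apply $\neg\xi_r$ to bound $\bar\mu_{i,n_i^{(r)}-1}^{(r)}$ by $\mu_i$ plus the confidence radius, invoke Lemma~\ref{lem:ucb_error_round}(3) at round $r-1$ to replace $\bar\mu_{a^{(r-1)}}^{(r-1)}$ by $\mu_*-g_{r-1}/2$, and rearrange. Your write-up is in fact a bit more careful than the paper's about the edge cases ($m\le 1$, $r=1$, $\Delta_i\le g_{r-1}$) and about why $\neg\xi_{r-1}$ is available, but the substance is the same.
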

\begin{proof}
	First, note that as long as an arm has not been ruled out, we have
	\begin{equation} \label{eq:before_ruleout3}
	\bar{\mu}_{i,n_i^{(r)}-1}^{r} + \sqrt{\frac{\log(1/\delta)}{2(n_i^{(r)}-1)}} \ge \bar{\mu}_{a^{(r-1)}}^{(r-1)} - \frac{g_{r-1}}{2}
	\end{equation}
	Then, we can show
	\begin{align}
	\Delta_i &= \mu_* - \mu_i  \nonumber\\
	&\le \bar{\mu}_{a^{(r-1)}}^{(r-1)} + \frac{g_{r-1}}{2} - \mu_i  \nonumber\\
	&\le \bar{\mu}_{a^{(r-1)}}^{(r-1)} + \frac{g_{r-1}}{2} - \left( \bar{\mu}_{i,n_i^{(r)}-1}^{(r)}-\sqrt{\frac{\log(1/\delta)}{2(n_i^{(r)}-1)}} \right) \notag\\
	&\le  2\left( \frac{g_{r-1}}{2} + \sqrt{\frac{\log(1/\delta)}{2(n_i^{(r)}-1)}} \right) \notag
	\end{align}
	where the second step follows from Lemma~\ref{lem:ucb_error_round}, the third step follows by $\neg \xi_r$, and the last step follows by \eqref{eq:before_ruleout3}.  Reorganizing the above inequality proves the lemma.
\end{proof}

\begin{proof}[Proof of Theorem~\ref{thm:ucb}]
	Consider Algorithm~\ref{alg:ucb1_unknown_T}.  For each round $r \in [r_{\max}]$, conditioned on $\neg \xi_r$, i.e. the confidence interval is correct, we first recognize two bounds on the number of plays $n_i^{(r)}$ for each arm $i \in [K]$.
	
	By the definition of Algorithm~\ref{alg:ucb_known_T}, we have
	\begin{equation} \label{eq:n_case2}
	n_i^{(r)} \le \ 2\log(1/\delta) /  g_r^2 +1
	\end{equation}
	Also, from Lemma~\ref{lem:ruled_out1}, we have
	\begin{equation} \label{eq:n_case1}
	n_i^{(r)} \le  2\log(1/\delta) / (\Delta_i- g_{r-1} )^2 +1 
	\end{equation}
	By combining \eqref{eq:n_case2} and \eqref{eq:n_case1}, together with $r_{\max} \le \lceil \log(2/\Delta) \rceil$ by Lemma~\ref{lem:ucb_max_R}, we can upper bound the regret results from pulling arm $i$ in the algorithm. Let $\alpha=\lceil \log(2/\Delta) \rceil$ and $\beta=\lceil\log(3/\Delta_i)\rceil$. Conditioning on event $\neg \xi_{r_{\max}}$ holds, we have,
	\begin{align*}
		 \sum_{r=1}^{\alpha} \Delta_i n_{i}^{(r)} \le &~ \sum_{r=1}^{\alpha} \Delta_i \left( \frac{2\log(1/\delta)}{\left( \max\left\{g_r, \Delta_i-2g_r \right\} \right)^2} +1 \right) &\nonumber\\
		= & ~\sum_{r=1}^{\alpha} \Delta_i \left( \frac{2\log(1/\delta)}{\left( \max\left\{2^{-r}, \Delta_i-2 \cdot 2^{-r} \right\} \right)^2} +1 \right) &\nonumber
	\end{align*}
	Furthermore, we can obtain {\footnotesize
	\begin{align}
		&\quad \sum_{r=1}^{\alpha} \Delta_i n_{i}^{(r)} \notag \\
		&\le \sum_{r=1}^{\beta} \Delta_i \cdot \frac{2\log(1/\delta)}{2^{-2r}} \notag + \sum_{r=\beta+1}^{\alpha} \Delta_i \cdot \frac{2\log(1/\delta)}{\left(\Delta_i - 2 \cdot 2^{-\log(3/\Delta_i)}\right)^2} \notag \\
		& + \Delta_i \cdot \lceil \log(2/\Delta) \rceil &\nonumber\\
		&\le \frac{288\log(1/\delta)}{\Delta_i} + \frac{18\log(2\Delta_i/3\Delta)\log(1/\delta)}{\Delta_i} \notag\\
		& + \Delta_i (\log(2/\Delta)+1) &\nonumber\\
		&\lesssim \frac{\log(\Delta_i/\Delta)\log(1/\delta)}{\Delta_i} \label{eq:regret_per_arm}
	\end{align}}
	For the next step, we find an upper bound for the probability of event $\xi_{r_{\max}} :=\big\{\exists r\in[r_{\max}] ,\exists i \in [K], \exists n \in [n_i^{(r)}] \text{ s.t. } |\bar{\mu}_{i,n}^{(r)} - \mu_i| > \sqrt{ \log(1/\delta) / (2n) }\big\}$: {\footnotesize 
	\begin{align}
		 & ~\Pr(\xi_{r_{\max}}) \notag\\
		 \le & ~\sum_{r=1}^{T/K} \sum_{i=1}^{K} \sum_{n=1}^{T} \Pr\left( |\bar{\mu}_{i,n}^{(r)} - \mu_i| > \sqrt{\log(1/\delta)/ (2n)}\right) \nonumber \\
		\le & ~ 2T^2 \delta 
		\label{eq:prob_E} 
	\end{align}}	
	Finally, by choosing $\delta = 1/T^3$, and combining \eqref{eq:regret_per_arm} and \eqref{eq:prob_E}, we have
	\begin{align*}
	 \ov{\Psi}_T & \lesssim ~ \sum_{i=1}^{K} \left(\frac{\log(\Delta_i/\Delta)\log(T)}{\Delta_i} + \Delta_i T \cdot 2T^2 \delta \right)  \\
	 & \lesssim ~ \sum_{i=1}^{K} \frac{\log(\Delta_i/\Delta)\log(T)}{\Delta_i} 
	 \end{align*}
	which proves the theorem.
\end{proof}

\section{Improved Algorithm for UCBConstSpace}\label{sec:ucb3}

The result in Theorem~\ref{alg:ucb1_unknown_T} gives an additional $O(\log(\Delta_i/\Delta))$ factor to the original UCB-1 algorithm by \cite{ACF02}.  This means that in a bad scenario, for example, if most of the arms have gap $\Delta_i = K \Delta$, the $O(\log(\Delta_i/\Delta))$ factor translates to an additional $\log K$ factor in the regret.

In this section, we show that we are able to improve the additional $\log(\Delta_i/\Delta)$ factor to a $\frac{\log(\Delta_i/\Delta)}{\log \log(\Delta_i/\Delta)}$ factor by slightly changing the update rule on the precision $g_r$.  This means that in the bad example described above, we are improving the competitive ratio from $\log K$ to $\frac{\log K}{\log \log K}$.  We present our result in the following theorem.
\begin{theorem}\label{thm:ucb3}
	Given a stochastic bandit instance with known $T$, let $\Delta_i = \mu_* - \mu_i$, and let $\Delta = \min_{i:\Delta_i>0} \Delta_i$. For any $\gamma>0$ and any $T>0$, there exists an algorithm that uses $O(1)$ words of space and achieves regret
	\begin{align*}
	O \left( \sum_{i:\Delta_i>0}  \frac{1}{\Delta_i} \left(\log^{\gamma}\frac{1}{\Delta_i} + \frac{\log (\Delta_i/\Delta)}{\gamma \log \log (\Delta_i/\Delta)} \right)\log(T) \right).
	\end{align*}
\end{theorem}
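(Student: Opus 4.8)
The plan is to run Algorithm~\ref{alg:ucb_known_T} \emph{verbatim}, changing only the precision update in Line~\ref{algl:update_Delta} from $g_{r+1}=g_r/2$ to
\[
g_{r+1} \;=\; g_r/\rho(g_r), \qquad \rho(g):=\max\{\,2,\ (\log(1/g))^{\gamma/2}\,\},
\]
so that the target precision is refined super-geometrically, by a factor that itself grows polynomially in how fine the current precision already is. The rationale: in the proof of Theorem~\ref{thm:ucb}, the spurious $\log(\Delta_i/\Delta)$ factor is accumulated one $\Theta(\log(1/\delta)/\Delta_i)$ chunk at a time over the $\Theta(\log(\Delta_i/\Delta))$ rounds whose precision $g_r$ lies in $(\Delta,\Delta_i)$ --- Lemma~\ref{lem:ruled_out1} saturates at $\approx 2\log(1/\delta)/\Delta_i^2$ pulls there, so shrinking $g_r$ faster does not reduce those pulls, it only reduces how many such rounds there are. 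The price for shrinking fast is that the single round in which $g_r$ first drops below $\Delta_i$ overshoots the scale $\Delta_i$ by a factor $\rho\approx(\log(1/\Delta_i))^{\gamma/2}$, where we must pay $\approx\rho^2\approx\log^{\gamma}(1/\Delta_i)$ times $\log(1/\delta)/\Delta_i$. Theorem~\ref{thm:ucb3} is exactly the outcome of balancing these two contributions.

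Next I would observe that none of the structural analysis changes: Lemmas~\ref{lem:ucb_error_round} and~\ref{lem:ruled_out1}, and the argument behind Lemma~\ref{lem:ucb_max_R}, use only that $\{g_r\}$ is strictly decreasing, that $g_{r-1}/g_r$ is the slack used in the rule-out test, and that $n_{a^{(r)}}^{(r)}=\lceil 2\log(1/\delta)/g_r^2\rceil$; all of these remain true. In particular, conditioned on $\neg\xi_{r_{\max}}$, the algorithm terminates no later than the first round $r$ with $g_r<\Delta/2$, and the bound $\Pr(\xi_{r_{\max}})\le 2T^2\delta$ from \eqref{eq:prob_E} is unchanged (there are still at most $T$ rounds). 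So, exactly as in the proof of Theorem~\ref{thm:ucb}, it suffices to bound, for each arm $i$ with $\Delta_i>0$ conditioned on $\neg\xi_{r_{\max}}$, the quantity $\sum_r \Delta_i n_i^{(r)}$, and then substitute $\delta=1/T^3$.

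Fix such an $i$ and let $r_i$ be the first round with $g_{r_i}<\Delta_i$; I would split $\sum_r\Delta_i n_i^{(r)}$ into three regimes. \textbf{(A)} Rounds $r<r_i$, where $g_r\ge\Delta_i$: here \eqref{eq:n_case2} gives $n_i^{(r)}\le 2\log(1/\delta)/g_r^2+1$, and since $\rho\ge2$ the values $1/g_r^2$ grow geometrically with ratio at least $4$, so the sum telescopes to $\lesssim \Delta_i\log(1/\delta)/g_{r_i-1}^2+\Delta_i r_i\lesssim\log(1/\delta)/\Delta_i$, the additive $\Delta_i r_i$ being absorbed just as the $\Delta_i(\log(2/\Delta)+1)$ term is in Theorem~\ref{thm:ucb}. \textbf{(B)} The two ``crossing'' rounds $r\in\{r_i,r_i+1\}$: here Lemma~\ref{lem:ruled_out1} may be weak (we can have $g_{r-1}$ as large as $\rho\cdot\Delta_i$), so I use the minimum of \eqref{eq:n_case2} and \eqref{eq:n_case1}; writing $x:=g_{r}/\Delta_i$ and noting $\log(1/g_r)=(1+o(1))\log(1/\Delta_i)$, hence $\rho(g_{r-1})^2\lesssim\log^{\gamma}(1/\Delta_i)$, the per-round contribution is $\lesssim\frac{\log(1/\delta)}{\Delta_i}\min\{\rho^2/x^2,\ 1/(1-x)^2\}\lesssim\frac{\log^{\gamma}(1/\Delta_i)\log(1/\delta)}{\Delta_i}$ (a constant when $x\le\tfrac12$, at most $\approx\rho^2$ when $x>\tfrac12$). \textbf{(C)} Rounds $r\ge r_i+2$ up to $r_{\max}$: now $g_{r-1}\le g_{r_i}/\rho\le\Delta_i/2$, so \eqref{eq:n_case1} gives $n_i^{(r)}\le 8\log(1/\delta)/\Delta_i^2+1$, i.e.\ each such round contributes $\lesssim\log(1/\delta)/\Delta_i$, for a total of $\lesssim(r_{\max}-r_i)\log(1/\delta)/\Delta_i$.

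The one genuinely new estimate, and the step I expect to be the main obstacle, is bounding $r_{\max}-r_i$. Setting $h_r:=\log(1/g_r)$, the new rule gives $h_{r+1}-h_r=\log\rho(g_r)\asymp\gamma\log h_r$ once $h_r$ exceeds a constant depending on $\gamma$, so $r_{\max}-r_i$ is at most the number of steps for $h$ to climb from $\approx\log(1/\Delta_i)$ to $\approx\log(1/\Delta)$, which is $\asymp\frac1\gamma\int_{\log(1/\Delta_i)}^{\log(1/\Delta)}\frac{dh}{\log h}$. I would then prove the elementary fact that $\int_a^b\frac{dh}{\log h}\lesssim\frac{b-a}{\log(b-a)}$ for $1<a<b$: split on whether $a\le\sqrt b$, in which case $\int_a^b\frac{dh}{\log h}\le\int_2^b\frac{dh}{\log h}=O(b/\log b)$ while $b\le 2(b-a)$ and $\log(b-a)\ge\tfrac12\log b$; or $a>\sqrt b$, in which case $\log h\ge\tfrac12\log b\ge\tfrac12\log(b-a)$ throughout $[a,b]$. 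Applying this with $b-a=\log(\Delta_i/\Delta)$ gives $r_{\max}-r_i\lesssim\frac1\gamma\cdot\frac{\log(\Delta_i/\Delta)}{\log\log(\Delta_i/\Delta)}$. Adding (A)--(C), the regret from arm $i$ conditioned on $\neg\xi_{r_{\max}}$ is $\lesssim\frac{\log(1/\delta)}{\Delta_i}\big(\log^{\gamma}(1/\Delta_i)+\frac1\gamma\cdot\frac{\log(\Delta_i/\Delta)}{\log\log(\Delta_i/\Delta)}\big)$; summing over $i$, adding the negligible $\Pr(\xi_{r_{\max}})\sum_i\Delta_i T$ term, and substituting $\log(1/\delta)=3\log T$ gives the theorem. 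Everything other than the crossing-round analysis in (B) and the logarithmic-integral bound is a bookkeeping variant of the proof of Theorem~\ref{thm:ucb}.
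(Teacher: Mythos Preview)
Your proposal is correct and follows the same overall strategy as the paper: modify only the precision update in Algorithm~\ref{alg:ucb_known_T} to a super-geometric rule (the paper uses $g_{r+1}=g_r/(2(\log(1/g_r))^{\gamma/2})$, which differs from your $\rho$ only by constant factors), reuse Lemmas~\ref{lem:ucb_error_round} and~\ref{lem:ruled_out1} verbatim, split the per-arm regret at the first round where $g_r$ drops below the scale $\Delta_i$, and bound the failure probability exactly as in~\eqref{eq:prob_E}. The paper absorbs your regime~(B) into regime~(A) by placing the threshold at $\Delta_i/2$ rather than $\Delta_i$, so that the overshoot factor $\rho^2\approx\log^{\gamma}(1/\Delta_i)$ simply appears as the last term of the telescoping sum in~\eqref{eq:ucb3_regret_derivation1_t1}; this is cosmetically cleaner than your explicit two-round crossing analysis but equivalent.

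The one place your argument takes a genuinely different route is the round-count bound $r_{\max}-r_i$. The paper proves this by a direct two-stage combinatorial argument (Lemma~\ref{lem:log_loglog}): from any $g_0$, after $r_0=\frac{\log(g_0/D)}{\log\log(g_0/D)}$ rounds the crude bound $g_r\le g_0 2^{-r}$ already forces $\log(1/g_r)\ge r_0$, whence each subsequent round shrinks $g$ by at least a factor $(\log(g_0/D))^{\gamma/4}$, and $O(r_0/\gamma)$ further rounds reach $D$. Your approach via the recurrence $h_{r+1}-h_r\asymp\gamma\log h_r$ and the logarithmic-integral estimate $\int_a^b dh/\log h\lesssim(b-a)/\log(b-a)$ is equally valid and perhaps more transparent about the origin of the $\log\log$, though it needs slightly more care in passing from the discrete recurrence to the integral, and you should restrict to $a\ge 2$ (i.e.\ $\Delta_i\le e^{-2}$; the complementary boundary case contributes $O(\log T)$ per arm and is harmless). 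Both arguments yield the same $O\big(\tfrac{1}{\gamma}\cdot\tfrac{\log(\Delta_i/\Delta)}{\log\log(\Delta_i/\Delta)}\big)$ bound.
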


We consider a modified version of Algorithm~\ref{alg:ucb_known_T}, where the update rule in Line~\ref{algl:update_Delta} is replaced by 
\begin{equation}\label{eq:new_update_rule}
g_{r+1}=\frac{g_r}{2\left(\log(1/g_r)\right)^{\epsilon}}
\end{equation}
where $\epsilon$ is some constant to be determined later.
In the following lemma, we show that with this update rule, basically given any $D<1$, it takes only $O\left(\frac{1}{\epsilon}\cdot \frac{\log(1/D)}{\log \log(1/D)}\right)$ steps to reach accuracy $D$.
\begin{lemma} \label{lem:log_loglog}
	Given any $g_0, D \in (0,1)$, $D<g_0$, let $r_0 = \frac{\log (g_0/D)}{\log \log (g_0/D)}$.  If for any positive integer $r$,
	$g_{r}=\frac{g_{r-1}}{2\left(\log(1/g_{r-1})\right)^{\epsilon}}$.
	Then, for any $r \ge (\frac{2}{\epsilon}+1)r_0+2$, we have $g_r \le D$.
\end{lemma}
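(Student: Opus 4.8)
The plan is to track the quantity $\log(1/g_r)$ and show it grows fast enough. Set $x_r = \log(1/g_r)$, so $x_0 = \log(1/g_0)$ and the recurrence $g_{r+1} = g_r / (2(\log(1/g_r))^\epsilon)$ becomes $x_{r+1} = x_r + \log 2 + \epsilon \log x_r \ge x_r + \epsilon \log x_r$. Our goal is to reach $g_r \le D$, i.e. $x_r \ge \log(1/D)$. First I would handle the regime where $x_r$ is small (say $x_r \le e^{2/\epsilon}$ or thereabouts): here each step still increases $x_r$ by at least $\log 2$, so after a bounded number of steps (which is where the additive constant like $+2$ and part of the $r_0$ term come from) we enter the regime $x_r \ge e^{2/\epsilon}$, equivalently $\epsilon \log x_r \ge 2$. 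Actually it is cleaner to argue: once $x_r \ge x_0$ and $x_r$ is large enough that $\epsilon\log x_r$ dominates, the increments are substantial; and while $x_r$ is small, $g_r$ is still at least a constant so we haven't passed $D$ yet (assuming $D$ is small; the case $D$ not small is absorbed by the additive constant since $r_0 \ge$ a constant when $D < g_0$).

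Next, for the main regime, I would show that once $x_r$ exceeds some threshold depending only on $\epsilon$, we have $x_{r+1} \ge x_r(1 + \delta)$ for... no — that geometric growth is too strong and not what the recurrence gives. Instead, the right estimate is $x_{r+1} \ge x_r + \epsilon \log x_r$, and to reach a target value $X = \log(1/D)$ (we may assume $X \ge \log(1/g_0) = x_0$, else we are already done after the constant burn-in) we bound the number of steps by $\int_{x_0}^{X} \frac{dx}{\epsilon \log x} \le \frac{1}{\epsilon}\cdot\frac{X}{\log X}$ up to lower-order terms, using that $\int \frac{dx}{\log x}$ is the logarithmic integral, asymptotic to $x/\log x$. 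The quantity $r_0 = \frac{\log(g_0/D)}{\log\log(g_0/D)}$ in the statement is exactly $\frac{X - x_0}{\log(X - x_0)}$-ish, so I would verify $\frac{X}{\log X} \le 2 r_0 + O(1)$ or directly that the step count is at most $\frac{2}{\epsilon} r_0 + (\text{constant})$. Combining the $(\le r_0 + 2)$-ish steps of burn-in with the $\le \frac{2}{\epsilon} r_0$ steps of the main phase yields the claimed bound $r \ge (\frac{2}{\epsilon}+1) r_0 + 2$.

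Concretely the key steps in order are: (1) substitute $x_r = \log(1/g_r)$ and derive $x_{r+1} \ge x_r + \epsilon\log x_r$ (and also $x_{r+1}\ge x_r + \log 2$ unconditionally); (2) dispose of the case where $D$ is not much smaller than $g_0$, where $r_0$ is $\Omega(1)$ and the additive constant suffices; (3) bound the number of steps to go from $x_0$ to the target $\log(1/D)$ by comparing the recurrence to the ODE $x' = \epsilon \log x$ and estimating $\int \frac{dx}{\epsilon\log x}$; (4) check that this integral is at most $\frac{2}{\epsilon}\cdot\frac{\log(g_0/D)}{\log\log(g_0/D)}$ plus a small additive term, which is where monotonicity of $t\mapsto t/\log t$ and crude bounds like $\log\log(1/D) \le \log\log(g_0/D) + O(1)$ get used. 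I expect step (3)–(4) — turning the discrete increment estimate into the clean closed-form bound $\frac{2}{\epsilon} r_0 + 2$ with the stated constants, and correctly handling the $\log g_0$ versus $\log D$ bookkeeping so that the $g_0/D$ ratio (rather than $1/D$) appears — to be the main technical obstacle; the rest is routine.
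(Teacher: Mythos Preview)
Your outline is workable, but the paper takes a cleaner two-phase route that completely avoids the logarithmic-integral estimate you flag as the main obstacle. The paper argues as follows: (i) from the crude bound $g_r \le g_0\,2^{-r}$ alone, after $\lceil r_0\rceil$ steps one has $g_r \le g_0\,2^{-r_0}$, hence $\log(1/g_r) \ge r_0 = \tfrac{\log(g_0/D)}{\log\log(g_0/D)}$; (ii) thereafter every step divides $g_r$ by at least $2\,r_0^{\epsilon} \ge 2(\log(g_0/D))^{\epsilon/2}$, so a further $\lceil \tfrac{2}{\epsilon} r_0\rceil$ steps divide $g_r$ by at least $(\log(g_0/D))^{r_0} = g_0/D$, and we are done. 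No ODE comparison, no integral; the constants $(\tfrac{2}{\epsilon}+1)r_0 + 2$ fall out directly from counting the two phases.

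Your plan routes Phase~2 through $\int_{x_0}^{X} \frac{dx}{\epsilon\log x}$ instead. Be careful about the direction of that comparison: with $f(x)=\epsilon\log x$ increasing, the standard telescoping gives $\int_{x_0}^{x_N}\frac{dx}{f(x)} \le N$, i.e.\ the integral is a \emph{lower} bound on the step count, not the upper bound you want. This can be repaired (by a dyadic partition of $[x_0,X]$, or by controlling $\log x_{r+1}/\log x_r$), and the factor-of-two slack in your target $\tfrac{2}{\epsilon}r_0$ would absorb the resulting constant---but this is precisely the bookkeeping the paper's argument eliminates. Both approaches yield the same bound; the paper's is shorter and gets the stated constants without asymptotic estimates on $\mathrm{li}(x)$.
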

\begin{proof}
	First, note that by definition of $g_r$, we have $g_r \le g_0 2^{-r}$ for any $r \ge 1$.  Therefore, for any $r \ge r_0$, we have
	$$
	g_r \le g_0 2^{-r} \le g_0 2^{-r_0} = g_0 (D/g_0)^{\frac{1}{\log \log (g_0/D)}}
	$$
	Then, we can see that for any $r \ge r_0$, we have
	$$
	g_{r+1} \le \frac{g_r}{2\left(\frac{\log (g_0/D)}{\log \log (g_0/D)}\right)^{\epsilon}} \le \frac{g_r}{2(\log (g_0/D))^{\epsilon/2}}
	$$
	As a result, we have
	\begin{align*}
	g_{r+\lceil\frac{2}{\epsilon}r_0\rceil} &\le \frac{g_r}{2^{\frac{2}{\epsilon}r_0} (\log (g_0/D))^{r_0}}\\
	&\le g_0 (\log (g_0/D))^{-r_0} = D
	\end{align*}
	This implies that for any $r \ge (r_0 + 1) + (\frac{2}{\epsilon}r_0 + 1) \ge \lceil{r_0}\rceil + \lceil{\frac{2}{\epsilon}r_0}\rceil$, we have $g_r \le D$.
\end{proof}
Note that we can apply Lemma~\ref{lem:ucb_error_round} and Lemma~\ref{lem:ruled_out1} for Algorithm~\ref{alg:ucb1_unknown_T} with update rule \eqref{eq:new_update_rule} because they do not require specific update rules.  Before we proceed to the proof of Theorem~\ref{thm:ucb3}, we need the following lemma for an upper bound of $r_{\max}$.
\begin{lemma}\label{lem:r_max}
	In Algorithm~\ref{alg:ucb1_unknown_T} with update rule \eqref{eq:new_update_rule}, given $\neg \xi_{r_{\max}}$, we have $r_{\max} \le \lceil(\frac{2}{\epsilon}+1)\frac{\log 2/\Delta}{\log \log 2/\Delta} + 2\rceil$.
\end{lemma}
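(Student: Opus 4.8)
The plan is to reuse, essentially verbatim, the contradiction argument from the proof of Lemma~\ref{lem:ucb_max_R}, changing only the single place where that proof used the doubling schedule $g_r = 2^{-r}$ to conclude $g_r < \Delta/2$. Everything else transfers: as noted just before the statement, Lemma~\ref{lem:ucb_error_round} and Lemma~\ref{lem:ruled_out1} do not depend on the update rule, so under $\neg\xi_{r_{\max}}$ we still know that at every round $r \le r_{\max}$ both the true best arm $*$ and the stored best arm $a^{(r)}$ are played to the full budget $\lceil 2\log(1/\delta)/g_r^2\rceil$, and that $|\bar\mu_{a^{(r)}}^{(r)} - \mu_*|\le g_r/2$ and $|\bar\mu_*^{(r)} - \mu_*|\le g_r/2$.

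First I would set $R := \lceil(\frac{2}{\epsilon}+1)\frac{\log(2/\Delta)}{\log\log(2/\Delta)} + 2\rceil$ and suppose toward a contradiction that $r_{\max} > R$, so that at round $r = R$ the stopping test on Line~\ref{algl:loop_until} failed (and $t \le T$). The key quantitative input I need is $g_R \le \Delta/2$, and this is exactly Lemma~\ref{lem:log_loglog}: apply it with starting value $g_0 = 1/2$ (which matches $g_1$ in the algorithm, up to the one-step index offset built into \eqref{eq:new_update_rule}) and target accuracy $D = \Delta/2$, so that $g_0/D = 1/\Delta \le 2/\Delta$ and $r_0 = \frac{\log(g_0/D)}{\log\log(g_0/D)} \le \frac{\log(2/\Delta)}{\log\log(2/\Delta)}$; the lemma then guarantees $g_r \le D = \Delta/2$ once $r \ge (\frac{2}{\epsilon}+1)r_0 + 2$, and $R$ is chosen (with the ceiling) to sit at or above this threshold. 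With $g_R \le \Delta/2$ in hand, the final three-line estimate of Lemma~\ref{lem:ucb_max_R} runs unchanged: using $\neg\xi_R$ and Lemma~\ref{lem:ucb_error_round} one gets $\mu_* \le \bar\mu_*^{(R)} + g_R/2 \le \bar\mu_{a^{(R)}}^{(R)} + 3g_R/2 < \bar\mu_{a^{(R)}}^{(R)} + 3\Delta/4$, while $\mu_{a^{(R)}} > \bar\mu_{a^{(R)}}^{(R)} - g_R/2 \ge \bar\mu_{a^{(R)}}^{(R)} - \Delta/4$, so $\Delta \le \mu_* - \mu_{a^{(R)}} < \Delta$, a contradiction; hence $r_{\max} \le R$.

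The only delicate point — and the main obstacle, such as it is — is the bookkeeping between the abstract parametrization of Lemma~\ref{lem:log_loglog} (an index-$0$ sequence, threshold $(\frac{2}{\epsilon}+1)r_0 + 2$) and the algorithm's indexing together with the choice $D = \Delta/2$: one has to check that the ceiling and the additive ``$+2$'' in the statement of Lemma~\ref{lem:r_max} genuinely absorb the one-step index offset coming from $g_1 = 1/2$ and the rounding hidden in $r_0$. I do not anticipate any conceptual difficulty beyond this arithmetic, since all probabilistic content and all per-round structure are inherited directly from the earlier lemmas.
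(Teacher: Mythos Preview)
Your proposal is correct and follows essentially the same approach as the paper's own proof: assume the stopping test fails at the threshold round, invoke Lemma~\ref{lem:log_loglog} to obtain $g_r \le \Delta/2$, and then rerun the three-line contradiction from Lemma~\ref{lem:ucb_max_R}. The paper is in fact terser than you are---it simply asserts ``By Lemma~\ref{lem:log_loglog}, we have $g_r \le \Delta/2$'' without discussing the index offset or the choice of $g_0$ and $D$---so your bookkeeping paragraph is more careful than what the paper provides.
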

\begin{proof}
	Assume the contrary, i.e. at the end of round $r=\lceil(\frac{2}{\epsilon}+1)\frac{\log 2/\Delta}{\log \log 2/\Delta} + 2\rceil$, the best arm and the second best arm are still not differentiated, meaning for some $i \not= *$, we still have
	$$
	\bar{\mu}_*^{(r)} - g_r/2 < \bar{\mu}_{i}^{(r)} + g_r/2
	$$
	By Lemma~\ref{lem:log_loglog}, we have $g_r \le \Delta/2$.  Thus, we have
	\begin{align*}
		\mu_* &\le \bar{\mu}_*^{(r)} + g_r/2 \le \bar{\mu}_i^{(r)} + 3g_r/2 \le \bar{\mu}_i^{(r)} + 3\Delta/4
	\end{align*}
	where for the second step we use Lemma~\ref{lem:ucb_error_round}.  Similarly, we have$ \mu_i \ge \bar{\mu}_i^{(r)} - \Delta/4.$
	Then, we have
	\begin{align*}
		\Delta \le \mu_* - \mu_i \le (\bar{\mu}_i^{(r)} + 3\Delta/4) - (\bar{\mu}_{a^{(r)}}^{(r)} - \Delta/4) < \Delta 
	\end{align*} 
	which results in a contradiction.
\end{proof}
\begin{proof}[Proof of Theorem~\ref{thm:ucb3}]
	Consider Algorithm~\ref{alg:ucb1_unknown_T} with update rule \eqref{eq:new_update_rule}.  For each arm $i \in [K]$, if we condition on $\neg \xi_{r_{\max}}$, then by Lemma~\ref{lem:ruled_out1} and Lemma~\ref{lem:r_max}, we can upper bound the regret results from pulling arm $i$ in the algorithm:
	\begin{align}
	& ~\sum_{r=1}^{r_{\max}} \Delta_i n_{i}^{(r)} \notag \\
	\le & ~ \sum_{r=1}^{r_{\max}} \Delta_i \left( \frac{2\log(1/\delta)}{\left( \max\left\{g_r, \Delta_i-g_{r-1} \right\} \right)^2} +1 \right) &\nonumber\\
	\le& ~\sum_{r=1}^{r_i} \Delta_i \cdot \frac{2\log(1/\delta)}{g_r^2} + \sum_{r=r_i+1}^{r_{\max}} \Delta_i \cdot \frac{2\log(1/\delta)}{\left(\Delta_i - g_{r-1}\right)^2} \notag \\
	& ~ + \Delta_i \cdot r_{\max} \label{eq:ucb3_regret_derivation1}
	\end{align}
	where $r_i$ be the minimal round $r$ such that $g_r<\Delta_i/2$. For the first term of \eqref{eq:ucb3_regret_derivation1}, since $g_r$ decays super-exponentially, i.e. $g_{r+1} \le g_r/2$, we have
	\begin{align}
	\sum_{r=1}^{r_i} \frac{2\Delta_i\log(1/\delta)}{g_r^2} &\le \frac{4\Delta_i\log(1/\delta)}{g_{r_i}^2} \nonumber\\ &=\frac{4\Delta_i\left(\log\frac{1}{g_{r_i-1}}\right)^{2\epsilon}\log(1/\delta)}{g_{r_i-1}^2} \nonumber\\
	&\le \frac{16\left(\log\frac{1}{\Delta_i}\right)^{2\epsilon}}{\Delta_i}\log(1/\delta) \label{eq:ucb3_regret_derivation1_t1}
	\end{align}
	where the last step follows from the fact that $g_{r_i-1} \ge \Delta_i/2$ by the definition of $r_i$.  For the second term of \eqref{eq:ucb3_regret_derivation1}, we have
	\begin{align}
	 \sum_{r=r_i+1}^{r_{\max}} \frac{2\Delta_i\log(1/\delta)}{\left(\Delta_i - g_{r-1}\right)^2} 
	\le & ~\sum_{r=r_i+1}^{r_{\max}} \frac{8\Delta_i\log(1/\delta)}{\Delta_i^2} \nonumber \\
	\le & ~\sum_{r=r_i+1}^{r_{\max}} \frac{8\log(1/\delta)}{\Delta_i} \label{eq:ucb3_regret_derivation1_t2}
	\end{align}
	By Lemma~\ref{lem:log_loglog}, we can find that it takes $ \lceil(\frac{2}{\epsilon}+1)\frac{\log (\Delta_i/\Delta)}{\log \log (\Delta_i/\Delta)} + 2\rceil$ rounds to get from $\Delta_i/2$ to $\Delta_i/2$.  As a result, we can upper bound \eqref{eq:ucb3_regret_derivation1_t2} by
	\begin{align}
	&\sum_{r=r_i+1}^{r_{\max}} \frac{2\Delta_i\log(1/\delta)}{\left(\Delta_i - g_{r-1}\right)^2} \nonumber \\
	&\le \left(\left(\frac{2}{\epsilon}+1\right)\frac{\log (\Delta_i/\Delta)}{\log \log (\Delta_i/\Delta)} + 3\right) \frac{8\log(1/\delta)}{\Delta_i} \nonumber \\
	&\le \left(\frac{2}{\epsilon}+1\right)\frac{\log (\Delta_i/\Delta)}{\log \log (\Delta_i/\Delta)} \frac{16\log(1/\delta)}{\Delta_i} \label{eq:ucb3_regret_derivation1_t2p}
	\end{align}
	
	Using the similar argument as we have done in the proof of Theorem~\ref{thm:ucb}, we can find that 
	\begin{equation} \label{eq:prob_E2}
	\Pr(\xi_{r_{\max}}) \le 2T^2\delta
	\end{equation}
	
	Finally, by combining \eqref{eq:ucb3_regret_derivation1_t1}, \eqref{eq:ucb3_regret_derivation1_t2p}, and \eqref{eq:prob_E2}, we can get {\footnotesize
	\begin{align*}
	 \ov{\Psi}_T \le & ~ 16\sum_{i:\Delta_i>0} \frac{1}{\Delta_i}\bigg(\log^{2\epsilon}\frac{1}{\Delta_i} + \left(\frac{2}{\epsilon}+1\right)\frac{\log (\Delta_i/\Delta)}{\log \log (\Delta_i/\Delta)}\bigg) \\
	& ~\cdot \log(1/\delta)  + \sum_{i:\Delta_i>0} \Delta_i T \cdot 2T^2 \delta
	\end{align*}}
	By choosing $\delta=1/T^3$ and $\epsilon = \gamma/2$ we can find that 
	\begin{align*}
	\ov{\Psi}_T \lesssim \sum_{i:\Delta_i>0}  \frac{1}{\Delta_i} \left(\log^{\gamma}\frac{1}{\Delta_i} + \frac{\log (\Delta_i/\Delta)}{\gamma \log \log (\Delta_i/\Delta)} \right)\log(T)
	\end{align*}
	which proves the theorem.
\end{proof}
We conjecture below that the $O(\frac{\log (\Delta_i/\Delta)}{ \log \log (\Delta_i/\Delta)})$ factor is not improvable given the $O(1)$ space constraint.  The discussion for our conjectured hard instance is in the appendix.
\begin{conjecture}\label{con:lower_bound}
	There exists a distribution over stochastic bandit problems such that, for any algorithm taking $O(1)$ words of space will have regret 
	\begin{align*}
	\Omega \left( \sum_{i:\Delta_i>0}  \frac{1}{\Delta_i} \left(\frac{\log (\Delta_i/\Delta)}{ \log \log (\Delta_i/\Delta)} \right)\log(T) \right).
	\end{align*}
\end{conjecture}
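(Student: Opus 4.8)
The plan is to prove Conjecture~\ref{con:lower_bound} by exhibiting a distribution over instances on which an $O(1)$-space algorithm is provably forced to re-examine almost all of the suboptimal arms once per ``scale'' while it refines an estimate of $\mu_*$, so that the tradeoff behind Lemma~\ref{lem:log_loglog} becomes a genuine lower bound. Put $L=\lceil\log(1/\Delta)\rceil$ and consider the following random instance on $K=2^L$ arms: draw a uniformly random permutation $\pi$ of $[K]$ and a uniformly random shift $s$ from a fine grid in $(0,1/4)$, set the mean of $\pi(K)$ to $\mu_*=1/2+s$, and for each $j\in[L]$ place $2^{L-j}$ of the remaining arms (selected by $\pi$) at mean $\mu_*-2^{-j}$, so that there are $2^{L-j}$ arms of gap $2^{-j}$ and the unique second-best arm has gap $2^{-L}=\Delta$. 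A direct computation gives $\sum_{i:\Delta_i>0}\tfrac{1}{\Delta_i}\tfrac{\log(\Delta_i/\Delta)}{\log\log(\Delta_i/\Delta)}\log T=\sum_{j=1}^{L}2^{L-j}\cdot 2^{j}\cdot\tfrac{\log 2^{L-j}}{\log\log 2^{L-j}}\log T=\Theta\!\big(\tfrac{L^2}{\log L}\tfrac{\log T}{\Delta}\big)$, which is a $\Theta(L/\log L)=\Theta(\log(1/\Delta)/\log\log(1/\Delta))$ factor above the space-unlimited UCB regret $\Theta(L\log T/\Delta)$ on the same instance; the random shift makes $\mu_*$ a priori unknown (so it must be learned online) and the permutation scatters the arms of each gap so that no $O(1)$-word state can record ``which arms of which gap have already been eliminated.''

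By the pseudo-regret identity \eqref{eq:pregret_wald} it suffices to lower bound, in expectation over the instance and the algorithm's coins, the number of pulls of a generic gap-$2^{-j}$ arm by $\Omega\!\big(\tfrac{L-j}{\log(L-j)}\cdot 2^{2j}\log T\big)$; summing $2^{-j}$ times this over all $2^{L-j}$ such arms and over $j$ reproduces the bound. For this I would track two quantities extracted from the (at most $O(\log KT)$-bit) state: the width $W(t)$ of the narrowest window around $\mu_*$ that the current state ``certifies'' (defined operationally by simulating what the algorithm would do if forced to commit now), and the set $\mathcal{S}(t)$ of arms the state ``certifies suboptimal'' (arms it would provably never pull again). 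Two facts drive the argument: (i) certifying an arm of gap $\delta$ as suboptimal requires $W(t)\lesssim\delta$ at the moment of certification, hence whenever $W(t)=g$ there are still $\Theta(2^{L}g)$ uncertified suboptimal arms (those of gap $<g$), each of which will be pulled again; and (ii) improving the certified window from width $g$ to width $g/2$ requires genuinely new information about $\mu_*$, which — because the state cannot retain the per-arm records of more than $O(1)$ arms — cannot be ``cached'' across the uncertified arms. The core claim, mirroring the proof of Theorem~\ref{thm:ucb3}, is then a per-epoch dichotomy: defining epoch boundaries as the times $W(t)$ first drops below $g_0>g_1>\cdots$ for the algorithm's own coarsest refinement schedule and $c_k=g_{k-1}/g_k$, in the transition epoch where $g_k$ first falls below $2^{-j}$ the generic gap-$2^{-j}$ arm is still indistinguishable from the best at the available precision $g_{k-1}\ge 2^{-j}$, forcing $\Omega(c_k^2 2^{2j}\log T)$ pulls of it by a Le~Cam / KL argument localized to that epoch; and for every one of the $\Omega((L-j)/\log\max_k c_k)$ later epochs with $g_k\in(\Delta,2^{-j})$ the algorithm must re-pay $\Omega(2^{2j}\log T)$ pulls to keep that arm uncertified-then-certified against its only-just-sharpened estimate. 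Summing over epochs and minimizing $c^2+(L-j)/\log c$ over the effective refinement factor $c$ gives $c\asymp\sqrt{(L-j)/\log(L-j)}$ and the claimed $\tfrac{L-j}{\log(L-j)}$ factor, tight against Theorem~\ref{thm:ucb3}.

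The main obstacle is the step that distinguishes a space lower bound from an ordinary sample-complexity bound: one must show that an arbitrary adaptive $O(1)$-word algorithm — which may interleave arms in any order, need not proceed in synchronized passes, and could try to store a succinct ``skip list'' or a partial certificate — genuinely cannot avoid re-processing the $\Theta(2^L g)$ uncertified suboptimal arms each time it sharpens $\mu_*$ by a constant factor. I expect this to require a potential function $\Phi(t)$ that simultaneously charges the certified width $W(t)$, a carefully weighted count of the still-uncertified suboptimal arms, and the regret incurred so far, proven to be a supermartingale except on steps that pull an uncertified suboptimal arm, together with a Le~Cam argument at each of the $O(\log(1/\Delta))$ scales (to show that an arm cannot be certified suboptimal without paying for it at that scale) and a union bound over scales; handling the randomness of $\pi$ (so that the state cannot have ``guessed'' the identities of the uncertified arms) and of $s$ (so that $\mu_*$ is information-theoretically un-cacheable to precision finer than the current scale) is what makes this a conjecture rather than a theorem. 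A secondary difficulty is verifying that a single geometric ladder of gaps is the right instance — as opposed to needing several interleaved ladders or a different population at each scale — so that the resulting lower bound matches $\sum_{i:\Delta_i>0}\tfrac1{\Delta_i}\tfrac{\log(\Delta_i/\Delta)}{\log\log(\Delta_i/\Delta)}\log T$ up to constants rather than merely a sub-sum of it.
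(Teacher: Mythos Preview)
The paper does not prove this statement --- it is explicitly a conjecture, and the appendix offers only a heuristic discussion. That discussion identifies the instance $\Delta_i = i/K$ as the conjectured hard case by observing that on it the ``inner'' and ``outer'' regret terms satisfy $R_{\out}^{(r)}/R_{\inn}^{(r)} \eqsim \log(1/g_r)$, so the best adaptive refinement rule $g_{r+1} = g_r/(2\log(1/g_r))$ coincides with the one analyzed in Theorem~\ref{thm:ucb3}; no argument is given that an arbitrary $O(1)$-space algorithm must behave this way.

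Your instance is a dyadic rounding of the paper's: $\Delta_i = i/K$ places $\Theta(2^{L-j})$ arms in each gap band $[2^{-j-1},2^{-j}]$ when $K=2^L$, so the two are equivalent up to constants, and the random permutation and shift you add are exactly the ingredients any genuine lower-bound proof would need. Your outline is substantially more developed than anything in the paper --- the per-scale Le~Cam localization, the epoch decomposition driven by $W(t)$, and the optimization $c^2 + (L-j)/\log c$ that recovers the $\log/\log\log$ rate are all plausible moving parts, and they correctly mirror the upper-bound analysis.

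The gap you yourself flag is the real one, and it is serious: you have not defined $W(t)$ or $\mathcal{S}(t)$ in a way that applies to an arbitrary adaptive algorithm, which need not ``certify'' anything, need not maintain any notion of a confidence window around $\mu_*$, and need not proceed in scales at all. Extracting a usable potential from an unstructured $O(\log KT)$-bit state is the entire difficulty of a space lower bound, and the supermartingale sketch presupposes the very structure you need to force. Claim~(i) is also not obviously true as stated: an algorithm can discard an arm by comparing it directly to another suboptimal arm rather than to any estimate of $\mu_*$, so ``certified suboptimal'' does not imply $W(t)\lesssim\delta$. In short, you have correctly identified the hard instance (matching the paper's) and the target tradeoff, and you go well beyond the paper's heuristic, but the proposal remains a proposal for exactly the reason you name.
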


\section{Unknown Horizon $T$} \label{sec:unknown_T}
Now, we show that using the technique described in \citep{AO10}, we are able to get the same regret as in Theorem~\ref{thm:ucb} if $T$ is unknown.

\begin{theorem}[Restatement of Theorem~\ref{thm:main_informal}]\label{thm:ucb2}
	Given a stochastic bandit instance with unknown $T$, let $\Delta_i = \mu_* - \mu_i$, and let $\Delta = \min_{i:\Delta_i>0} \Delta_i$. For any $T>0$, there exists an algorithm that uses $O(1)$ words of space and achieves regret
	$$O\left(\sum_{i:\Delta_i>0} { \frac{\log (\Delta_i/\Delta)}{\Delta_i}\log T}\right)$$
\end{theorem}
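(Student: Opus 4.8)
The plan is to invoke the doubling trick of \cite{AO10}. A naive geometric schedule $T_\ell = 2^\ell$ will not work: it would turn a per-phase regret of order $\log T_\ell$ into a total of $\sum_{\ell \le \log T}\ell = \Theta(\log^2 T)$, costing an extra $\log T$ factor. Instead I would use a \emph{doubly} geometric schedule of only $O(\log\log T)$ phases, so that the $\log T_\ell$ factors form a geometric series whose final term dominates the sum. Concretely, partition the unknown horizon into phases $\ell = 1, 2, \ldots$, phase $\ell$ lasting $T_\ell := 2^{2^\ell}$ steps (with the final phase truncated when time runs out), and in phase $\ell$ run a fresh instance of \textsc{UCBConstSpace}$(K, T_\ell)$ of Algorithm~\ref{alg:ucb_known_T} --- legal since $T_\ell$ is known when the phase starts.

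First I would check that this stays in $O(1)$ space: the schedule is driven only by an $O(1)$-word phase index and an $O(1)$-word within-phase step counter, and every phase actually entered has $T_\ell \le T^2$ (because the previous phase ended before time $T$), so all these quantities fit in $O(\log(KT))$ bits, and each phase's own working memory is $O(1)$ by Theorem~\ref{thm:ucb}.

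Next I would bound the regret. Let $M = O(\log\log T)$ be the number of phases entered, so that $2^M \le 2\log T$. Pseudo-regret is subadditive over a partition of the horizon --- the comparator $\mu_*$ is the same in every phase, within a phase the algorithm sees a fresh i.i.d.\ instance of the same bandit, and truncating the final phase only lowers its pseudo-regret since per-arm play counts are monotone --- so Theorem~\ref{thm:ucb} applied phase by phase gives
\[
\ov{\Psi}_T \;\lesssim\; \sum_{\ell=1}^{M}\ \sum_{i:\Delta_i>0}\frac{\log(\Delta_i/\Delta)}{\Delta_i}\,\log T_\ell \;\lesssim\; \Big(\sum_{i:\Delta_i>0}\frac{\log(\Delta_i/\Delta)}{\Delta_i}\Big)\sum_{\ell=1}^{M} 2^\ell ,
\]
and since $\sum_{\ell=1}^{M} 2^\ell \le 2^{M+1} \le 4\log T$, this is exactly $O\big(\sum_{i:\Delta_i>0}\frac{\log(\Delta_i/\Delta)}{\Delta_i}\log T\big)$.

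The one place needing care is the cost of \emph{restarting} each phase. The re-incurred exploration is already folded into the per-phase bound above. For the failure event $\xi$ of Theorem~\ref{thm:ucb}, taking $\delta_\ell = 1/T_\ell^3$ in phase $\ell$ makes the expected regret it contributes at most $O(T_\ell^3\delta_\ell) = O(1)$ per phase (bounding whole-phase regret crudely by $T_\ell$), hence $O(M) = O(\log\log T)$ in total, which is dominated by the $\Omega(\log T)$ main term. I expect the main obstacle to be simply committing to the right doubly-exponential schedule so that the $\log T_\ell$'s collapse to a single $\log T$ rather than a $\log^2 T$ or $\log T\log\log T$; beyond that, everything is bookkeeping inherited from Theorem~\ref{thm:ucb}.
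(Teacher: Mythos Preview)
Your proposal is correct and matches the paper's own proof essentially line for line: the paper also runs \textsc{UCBConstSpace}$(K,T_\ell)$ on a doubly-exponential schedule (they take $T_\ell=T_{\ell-1}^2$ with $T_0=10$, so $T_\ell=T_0^{2^\ell}$, versus your $T_\ell=2^{2^\ell}$), observes there are at most $\log\log T$ phases, and sums the per-phase bounds from Theorem~\ref{thm:ucb} as a geometric series in $2^\ell$. Your write-up is in fact more careful than the paper's---you justify the $O(1)$-word space across phases, the truncation of the last phase, and the failure-event contribution explicitly---but none of this adds a new idea; the paper simply leaves those points implicit.
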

\begin{proof}
	We present the algorithm in Algorithm~\ref{alg:ucb1_unknown_T}.  The algorithm repeatedly calls the procedure in Algorithm~\ref{alg:ucb_known_T} with increasing time horizons $T_0, T_1, \dots, T_L$, where $L \le \log \log T$.  By setting $T_l = T_{l-1}^2$, we have $T_l = T_0^{2^l}$.  Then, by Theorem~\ref{thm:ucb}, we can upper bound the regret as
	\begin{align*}
		\ov{\Psi}_T &\lesssim ~ \sum_{l=0}^{L} \sum_{i=1}^{K} \frac{\log(\Delta_i/\Delta)\log T_l}{\Delta_i} \\
		& = ~ \sum_{l=0}^{L} \sum_{i=1}^{K} \frac{2^l \log(\Delta_i/\Delta)\log T_0}{\Delta_i} \\
		& \lesssim ~ \sum_{i=1}^{K} \frac{2^L \log(\Delta_i/\Delta)\log T_0}{\Delta_i} \\
		& \lesssim ~ \sum_{i=1}^{K} \frac{\log(\Delta_i/\Delta)\log T}{\Delta_i}
	\end{align*}
	which proves the theorem.
\end{proof}

\begin{algorithm}[t]\caption{UCB algorithm with constant space and unknown $T$ (Theorem~\ref{thm:ucb2} and Theorem~\ref{thm:ucb3})}
	\label{alg:ucb1_unknown_T}
	\begin{algorithmic}[1]{
			\Procedure{\textsc{UCBCS-UnknownT}}{$K$}
			\State Initialize $T_0 \gets 10$
			\State $l \gets 0$, $t \gets 1$
			\While { $t \le T$ }
			\State Call \textsc{UCBConstSpace($K,T_l$)}, 
			\State $t \gets t+T_l$
			\State $l \gets l+1$
			\State $T_l \gets T_{l-1}^2$
			\EndWhile
			\EndProcedure}
	\end{algorithmic}
\end{algorithm}

Similarly, we are able to use this trick for the improved algorithm in Section~\ref{sec:ucb3} and get the same regret as in Theorem~\ref{thm:ucb3}.
\begin{theorem}[Restatement of Theorem~\ref{thm:main_informal2}]\label{thm:ucb3_better}
	Given a stochastic bandit instance with unknown $T$, let $\Delta_i = \mu_* - \mu_i$, and let $\Delta = \min_{i:\Delta_i>0} \Delta_i$. For any $\gamma>0$ and any $T>0$, there exists an algorithm that uses $O(1)$ words of space and achieves regret
	\begin{align*}
		O \left( \sum_{i:\Delta_i>0}  \frac{1}{\Delta_i} \left(\log^{\gamma}\frac{1}{\Delta_i} + \frac{\log (\Delta_i/\Delta)}{\gamma \log \log (\Delta_i/\Delta)} \right)\log(T) \right).
	\end{align*}
\end{theorem}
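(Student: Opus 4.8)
The plan is to combine the improved known-$T$ algorithm of Theorem~\ref{thm:ucb3} with exactly the same horizon-doubling reduction used to prove Theorem~\ref{thm:ucb2} from Theorem~\ref{thm:ucb}. Concretely, I would run Algorithm~\ref{alg:ucb1_unknown_T}, but have each call invoke the modified procedure that uses the update rule~\eqref{eq:new_update_rule} with $\epsilon=\gamma/2$ (the variant analyzed in Theorem~\ref{thm:ucb3}) instead of the plain halving rule. As in the proof of Theorem~\ref{thm:ucb2}, set $T_0$ to a small constant and $T_l = T_{l-1}^2$, so that $T_l = T_0^{2^l}$ and the number of epochs is $L \le \log\log T$.

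The key steps, in order, are: (1) invoke Theorem~\ref{thm:ucb3} to bound the regret incurred in epoch $l$ by
\begin{align*}
O\left( \sum_{i:\Delta_i>0}  \frac{1}{\Delta_i} \left(\log^{\gamma}\frac{1}{\Delta_i} + \frac{\log (\Delta_i/\Delta)}{\gamma \log \log (\Delta_i/\Delta)} \right)\log(T_l) \right);
\end{align*}
(2) sum over $l = 0,\dots,L$, pulling the arm-dependent factor out of the sum since it does not depend on $l$; (3) observe $\sum_{l=0}^{L} \log T_l = \sum_{l=0}^{L} 2^l \log T_0 \le 2^{L+1}\log T_0$, and since $T_L \ge T$ we have $2^L \log T_0 = \log T_L \gtrsim \log T$, so the geometric sum is dominated by its last term and equals $O(\log T)$; (4) conclude
\begin{align*}
\ov{\Psi}_T \lesssim \sum_{i:\Delta_i>0}  \frac{1}{\Delta_i} \left(\log^{\gamma}\frac{1}{\Delta_i} + \frac{\log (\Delta_i/\Delta)}{\gamma \log \log (\Delta_i/\Delta)} \right)\log(T).
\end{align*}

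There is essentially no new obstacle here: the analysis is a verbatim copy of the proof of Theorem~\ref{thm:ucb2}, with Theorem~\ref{thm:ucb} replaced by Theorem~\ref{thm:ucb3}, because the per-arm regret bound has the same shape $(\text{arm-dependent factor}) \cdot \log(T_l)$ and the reduction only manipulates the $\log T_l$ term. The one point worth stating carefully is that the extra additive constants in $r_{\max}$ (the "$+2$" and "$+3$" in Lemma~\ref{lem:r_max} and~\eqref{eq:ucb3_regret_derivation1_t2p}) are absorbed into the $O(\cdot)$ exactly as in Theorem~\ref{thm:ucb3}, and that the failure probability bound $\Pr(\xi_{r_{\max}}) \le 2T_l^2\delta$ with $\delta = 1/T_l^3$ contributes only $O(\sum_i \Delta_i T_l \cdot 2 T_l^2 / T_l^3) = O(\sum_i \Delta_i) = O(K)$ per epoch, which is dominated by the main term. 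Summing the $O(K)$ over $L \le \log\log T$ epochs is still negligible compared to the $\Omega(K\log T)$ main term, so the bound holds.
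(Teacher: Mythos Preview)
Your proposal is correct and follows exactly the approach the paper intends: the paper does not spell out a proof of Theorem~\ref{thm:ucb3_better} but simply remarks that the doubling trick of Theorem~\ref{thm:ucb2} applies verbatim with Theorem~\ref{thm:ucb3} in place of Theorem~\ref{thm:ucb}. One minor wording slip: in step~(3) you write $\log T_L \gtrsim \log T$, but what you actually need (and what holds, since $T_{L-1} \le T$ implies $T_L \le T^2$) is $\log T_L \lesssim \log T$; the conclusion $\sum_l \log T_l = O(\log T)$ is unaffected.
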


\section{Conclusion} \label{sec:con} We proposed a constant space
algorithm for the stochastic multi-armed bandits problem.  Our
algorithms proceeds by iteratively refining a confidence interval
containing the best arm's value.  In the simpler version of our
algorithm, we refine the interval by a constant factor in each step,
and each iteration only uses $O(\OPT)$ regret.  This gives an
$O(\log\frac{1}{\Delta})$-competitive algorithm.  We then showed how
to improve this by an $O(\log \log \frac{1}{\Delta})$ factor in
certain cases, by using fewer rounds that give more progress.
Finally, we showed how to adapt our algorithms---which involve
parameters that depend on the time horizon $T$---to situations with
unknown time horizon.








%
\section{Acknowledgments}
This work was partially supported by NSF through grants CCF-1216103, CCF-1331863, CCF-1350823 and CCF-1733832.

\appendix

\section{Discussion of Conjecture on the Lower Bound for Stochastic Bandits}
For any given round $r$, for some $\alpha>0$, define:
\begin{align*}
&R_{\inn}^{(r)} = \sum_{i: \Delta_i<\alpha g_r} \frac{1}{g_r},
&R_{\out}^{(r)} =\sum_{i: \Delta_i>\alpha g_r} \frac{1}{\Delta_i}
\end{align*}
That correspond to two terms in 
\begin{align}
&~\sum_{r=1}^{r_i} \Delta_i \cdot \frac{2\log(1/\delta)}{g_r^2} + \sum_{r=r_i+1}^{r_{\max}} \Delta_i \cdot \frac{2\log(1/\delta)}{\left(\Delta_i - g_{r-1}\right)^2} \notag \\
& ~ + \Delta_i \cdot r_{\max}
\end{align}
which is the total regret provided within Section \ref{sec:ucb3}.  Consider the following example where there is a group of high-value arms and a group of low-value arms, and the size of the low-value arms is larger than the high-value arms.
\begin{example}\label{ex:easy1}
	Assume $1>E \gg \epsilon$, and $s>1/2$. Let $\Delta_i=\epsilon$ for $i=1\dots sK$, and $\Delta_i=E$ for $i=sK+1 \dots K$.
\end{example}
In this example, we can find that as $g_r<E/2$, $R_{\inn}^{(r)} = sK/g_r$, and $R_{\out}^{(r)}=(1-s)K/E$.  Since $g_r \lesssim E$ and $s>1/2$, we can find that $R_{\out}^{(r)} \lesssim R_{\inn}^{(r)}$.  This means that Example~\ref{ex:easy1} will not harm us if we use Algorithm~\ref{alg:ucb_known_T} because we know that $\sum_{r} R_{\inn}^{(r)} \lesssim \sum_{i} 1/\Delta_i$.

Then, we consider another example where the size of the group of the high-value arms is larger than low-value arms.  Particularly, we consider
\begin{example}\label{ex:easy2}
	Assume $1>E \gg \epsilon$, and $s<1/2$, where $s/(1-s)<\epsilon/E$. Let $\Delta_i=\epsilon$ for $i=1\dots sK$, and $\Delta_i=E$ for $i=sK+1 \dots K$.
\end{example}
We can find that in this example, as long as $\epsilon \lesssim g_r\lesssim E$, $R_{\inn}^{(r)} \lesssim sK/\epsilon \lesssim (1-s)/E = R_{\out}^{(r)}$.  This means that this is the hard case for Algorithm~\ref{alg:ucb_known_T} because $R_{\out}^{(r)}$ is dominating.  However, we can deal with this example with the following update rule
$$ g_{r+1} = \frac{g_r}{2 \max\{1,(1-s)/s\}} $$
which is roughly $ g_{r+1} = \frac{g_r}{2 \max\{1,R_{\out}^{(r)}/R_{\inn}^{(r)}\}} $.
Note that if $s$ is unknown, we can estimate it by simply counting the number of arms not ruled out.  With the new update rule, we can find that as long as $g_r \lesssim E$, we have $g_{r+1} \lesssim Es/(1-s) \lesssim \epsilon$.  This means that in the next round, we are able to identify the high-value arms.  Therefore, the number of rounds is a constant.

Finally, we consider the following case where we conjectured to be the hard case:
\begin{example}
	Let $\Delta_i = i/K$ for $i = 1,2,\dots, K$.
\end{example}
First note that in this example, $R_{\inn}^{(r)} \eqsim n$ and $R_{\out}^{(r)} \eqsim n\log 1/g_r$, where we can find that $R_{\inn}^{(r)} \lesssim R_{\out}^{(r)}$ for any $r$.  If we use the trick we are dealing with Example~\ref{ex:easy2}, we can find that the corresponding update rule becomes $g_{r+1} = \frac{g_r}{2 \log 1/g_r}$.  Such rule is exactly \eqref{eq:new_update_rule}.  Therefore, we conjecture that the additional $\frac{\log (\Delta_i/\Delta)}{ \log \log (\Delta_i/\Delta)}$ factor is not improvable.

\clearpage 
\bibliographystyle{abbrvnat}
\bibliography{ref}

\end{document}